\newcommand{\nocomments}[0]{0} \newcommand{\noappendix}[0]{0} 
\newcommand{\repeattheorem}[1]{\begingroup
  \renewcommand{\thetheorem}{\ref{#1}}\expandafter\expandafter\expandafter\theorem
  \csname reptheorem@#1\endcsname
  \endtheorem
  \endgroup
}
\crefname{definition}{Def.}{Defs.}
\crefname{proposition}{Prop.}{Props.}
\crefname{algorithm}{Alg.}{Algs.}
\crefname{table}{Tab.}{Tabs.}
\crefname{figure}{Fig.}{Figs.}
\crefname{example}{Ex.}{Exs.}
\crefname{section}{Sec.}{Secs.}
\definecolor{bluekeywords}{rgb}{0.13, 0.13, 1}
\definecolor{greencomments}{rgb}{0, 0.5, 0}
\definecolor{redstrings}{rgb}{0.9, 0, 0}
\definecolor{graynumbers}{rgb}{0.5, 0.5, 0.5}
\algnewcommand\algorithmicforeach{\textbf{for each}}
\algnewcommand{\LineComment}[1]{\Statex \(\triangleright\) #1}
\algrenewcommand\alglinenumber[1]{\scriptsize #1:}
\algrenewcommand\algorithmicindent{2mm}
\newcommand\keywordfont{\sffamily\bfseries}
\newcommand\kwf{\keywordfont}
\algrenewcommand\algorithmicend{{\keywordfont end}}
\algrenewcommand\algorithmicdo{{\keywordfont do}}
\algrenewcommand\algorithmicwhile{{\keywordfont while}}
\algrenewcommand\algorithmicfor{{\keywordfont for}}
\algrenewcommand\algorithmicforall{{\keywordfont for all}}
\algrenewcommand\algorithmicforeach{{\keywordfont for each}}
\algrenewcommand\algorithmicloop{{\keywordfont loop}}
\algrenewcommand\algorithmicrepeat{{\keywordfont repeat}}
\algrenewcommand\algorithmicuntil{{\keywordfont until}}
\algrenewcommand\algorithmicprocedure{{\keywordfont procedure}}
\algrenewcommand\algorithmicfunction{{\keywordfont function}}
\algrenewcommand\algorithmicif{{\keywordfont if}}
\algrenewcommand\algorithmicthen{{\keywordfont then}}
\algrenewcommand\algorithmicelse{{\keywordfont else}}
\algrenewcommand\algorithmicrequire{{\keywordfont Require:}}
\algrenewcommand\algorithmicensure{{\keywordfont Ensure:}}
\algrenewcommand\algorithmicreturn{{\keywordfont ret}}
\algnewcommand\algorithmicis{{\keywordfont is}}
\newcommand{\ag}[1]{\textcolor{CarnationPink}{\textbf{AG:} #1}}
\newcommand{\ig}[1]{\textcolor{Emerald}{\textbf{IG:} #1}}
\newcommand{\hg}[1]{\textcolor{Brown}{\textbf{HG:} #1}}
\newcommand{\sharon}[1]{\textcolor{Purple}{\textbf{SS:} #1}}
\newcommand{\todo}[1]{\textcolor{Red}{\textbf{TODO:} #1}}
\renewcommand{\ag}[1]{}
\renewcommand{\ig}[1]{}
\renewcommand{\hg}[1]{}
\renewcommand{\sharon}[1]{}
\renewcommand{\todo}[1]{}
\newcommand{\tgnodes}[0]{\ensuremath{\mathit{N}}\xspace}
\newcommand{\tgedges}[0]{\ensuremath{\mathit{E}}\xspace}
\newcommand{\tgreps}[0]{\ensuremath{\mathtt{repr}}\xspace}
\newcommand{\tgroot}[0]{\ensuremath{\mathtt{root}}\xspace}
\newcommand{\tglabel}[0]{\ensuremath{L}\xspace}
\newcommand{\classof}[0]{\ensuremath{\mathit{class}}\xspace}
\newcommand{\toexpr}[0]{\ensuremath{\mathtt{ntt}}\xspace}
\newcommand{\toformula}[0]{\ensuremath{\mathtt{to\_formula}}\xspace}
\newcommand{\egraph}[0]{\ensuremath{\mathit{egraph}}\xspace}
\newcommand{\formula}[0]{\ensuremath{\mathit{isFormula}}\xspace}
\newcommand{\ntt}[0]{\ensuremath{\mathit{term}}\xspace}
\newcommand{\nttrepr}[1]{\ensuremath{\toexpr(#1)}\xspace}
\newcommand{\frontiers}[0]{\ensuremath{\mathcal{F}\xspace}}
\newcommand{\cfrontiers}[0]{\ensuremath{c\mathcal{F}\xspace}}
\newcommand{\egtuple}[0]{\ensuremath{G = \langle \tgnodes, \tgedges, \tglabel, \tgroot
\rangle}}
\newcommand{\limp}{\Rightarrow}
\renewcommand{\Vec}[1]{\vb*{#1}}
\newcommand{\degree}{\ensuremath{\mathtt{deg}}}
\newcommand{\children}{\ensuremath{\mathtt{children}}}
\newcommand{\qel}{QEL\xspace}
\newcommand{\qelim}{qelim\xspace}
\newcommand{\eqdef}{\triangleq}
\newcommand{\vars}{\ensuremath{\Vec{v}}\xspace}
\newcommand{\varsp}{\ensuremath{\Vec{u}}\xspace}
\newcommand{\node}{\mathsf{N}}
\newcommand{\undefrepr}{\bigstar}
\newcommand{\frnt}{\Pi}
\newcommand{\cground}{c\text{-}ground\xspace}
\newcommand{\processQ}{\ensuremath{\mathtt{process}}\xspace}
\newcommand{\checkr}{\ensuremath{\mathtt{match}}\xspace}
\newcommand{\applyr}{\ensuremath{\mathtt{apply}}\xspace}
\newcommand{\egraphmbp}{\textsc{MBP-QEL}\xspace}
\DeclareMathOperator{\ueq}{\mathit{eq}}
\DeclareMathOperator{\eq}{\approx}
\DeclareMathOperator{\deq}{\ensuremath{\not\approx}}
\newcommand{\finddefs}[0]{\ensuremath{\mathtt{find\_defs}}\xspace}
\newcommand{\refinedefs}[0]{\ensuremath{\mathtt{refine\_defs}}\xspace}
\newcommand{\findcore}[0]{\ensuremath{\mathtt{find\_core}}\xspace}
\newcommand{\core}[0]{\ensuremath{\mathtt{core}}\xspace}
\newcommand{\repf}[0]{representative function\xspace}
\newcommand{\filter}{\ensuremath{S}\xspace}
\newcommand{\acq}{\ensuremath{\mathit{todo}}\xspace}
\newcommand{\ztg}{\textsc{Z3eg}\xspace}
\newcommand{\zthree}{\textsc{Z3}\xspace}
\newcommand{\zvanilla}{\textsc{Z3}\xspace}
\newcommand{\eld}{\textsc{Eldarica}\xspace}
\newcommand{\qelite}{\textsc{QeLite}\xspace}
\newcommand{\spacer}{\textsc{Spacer}\xspace}
\newcommand{\notcore}{S}
\newcommand{\fst}{\mathit{fst}}
\newcommand{\snd}{\mathit{snd}}
\newcommand{\ap}{\vec{p}}
\newcommand{\intsort}{I}
\newcommand{\pairsort}{P}
\newcommand{\pair}{\mathit{pair}}
\newcommand{\arread}{\mathit{read}}
\newcommand{\arwrite}{\mathit{write}}
\newcommand{\intarr}{A_{\intsort\times\intsort}}
\newcommand{\condappendix}[1]{\ifthenelse{\noappendix = 1}{}{#1}}
\begin{document}

\title{Fast Approximations of Quantifier Elimination
}
\author{
   Isabel Garcia-Contreras\inst{1}\orcidID{0000-0001-6098-3895}
   \and
   Hari Govind V~K\inst{1}\orcidID{0000-0002-2789-5997}
   \and
   Sharon Shoham\inst{2}\orcidID{0000-0002-7226-3526}
   \and
   Arie Gurfinkel\inst{1}\orcidID{0000-0002-5964-6792}
 }
\institute{University of Waterloo, Waterloo, Canada \\
   \email{\{igarciac,hgvedira,agurfink\}@uwaterloo.ca}
   \and Tel-Aviv University, Tel Aviv, Israel \\ \email{sharon.shoham@cs.tau.ac.il}}
\maketitle              \begin{abstract}

Quantifier elimination (qelim) is used in many automated reasoning tasks including
program synthesis, exist-forall solving, quantified SMT, Model
Checking, and solving Constrained Horn Clauses (CHCs). Exact qelim is computationally expensive. Hence, it is often approximated. For example, Z3 uses ``light'' pre-processing to reduce the number of quantified variables. CHC-solver Spacer uses model-based projection (MBP) to under-approximate qelim relative to a given model, and over-approximations of qelim can be used as abstractions.

In this paper, we present the \qel{} framework for fast approximations of qelim. \qel provides a uniform interface for both quantifier reduction and model-based projection. \qel builds on the egraph data structure -- the core of the EUF decision procedure in SMT -- by casting quantifier reduction as a problem of choosing \emph{ground} (i.e., variable-free) representatives for equivalence classes.
We have used \qel to implement MBP for the theories of Arrays and Algebraic Data Types (ADTs).  We integrated \qel and our new MBP in Z3 and evaluated it within several
tasks that rely on quantifier approximations, outperforming state-of-the-art.

 \end{abstract}

\section{Introduction}

Quantifier Elimination (\qelim) is used in many automated reasoning tasks including program synthesis~\cite{DBLP:conf/pldi/KuncakMPS10}, exist-forall solving~\cite{DBLP:conf/cav/Dutertre14,dutertre2015solving}, quantified SMT~\cite{DBLP:conf/lpar/BjornerJ15}, and Model Checking~\cite{DBLP:conf/cav/KomuravelliGC14}.
Complete \qelim, even when possible, is computationally expensive, and solvers often approximate it. We call these approximations \emph{quantifier reductions}, to separate them from qelim. The difference is that quantifier reduction might leave some free variables in the formula.

For example, Z3~\cite{z3} performs quantifier reduction, called \qelite, by greedily substituting variables by definitions syntactically appearing in the formulas. While it is very useful, it is necessarily  sensitive to the order in which variables are substituted and depends on definitions appearing explicitly in the formula.
Even though it may seem that these shortcomings need to be tolerated to keep \qelite fast, in this paper we show that it is not actually the case; we propose an egraph-based algorithm, \qel, to perform fast quantifier reduction 
that is complete relative to some semantic properties of the formula.

Egraph~\cite{DBLP:conf/focs/NelsonO77} is a data structure that compactly represents infinitely many terms and their equivalence classes. It was initially proposed as a decision procedure for EUF~\cite{DBLP:conf/focs/NelsonO77} and used for theorem proving (e.g., \textsc{Simplify}~\cite{10.1145/1066100.1066102}). Since then, the applications of egraphs have grown. Egraphs are now used as term rewrite systems in equality saturation~\cite{DBLP:conf/pldi/JoshiNR02,DBLP:journals/corr/abs-1012-1802}, for theory combination in SMT solvers~\cite{DBLP:journals/toplas/NelsonO79,10.1145/1066100.1066102}, and for term abstract domains in Abstract Interpretation~\cite{DBLP:conf/fsttcs/GulwaniTN04,DBLP:conf/vmcai/ChangL05,DBLP:conf/vmcai/GangeNSSS16}.

Using egraphs for rewriting or other formula manipulations (like qelim) requires a special operation, called \emph{extract}, that converts nodes in the egraph back into terms. Term extraction was not considered when egraphs were first designed~\cite{DBLP:conf/focs/NelsonO77}. As far as we know, extraction was first studied in the application of egraphs for compiler optimization.
Specifically, equality saturation~\cite{DBLP:conf/pldi/JoshiNR02,10.1145/1480881.1480915} is an optimization technique over egraphs that consists in populating an egraph with many equivalent terms inferred by applying rules. When the egraph is saturated, i.e., applying the rules has no effect, the equivalent term that is most desired, e.g., smallest in size, is \emph{extracted}. This is a recursive process that extracts each sub-term by choosing one representative among its equivalents. 

Application of egraphs to rewriting have recently resurged driven by the \texttt{egg} library~\cite{egg} and the associated workshop\footnote{\url{https://pldi22.sigplan.org/series/egraphs}.}. In~\cite{egg}, the authors show, once again, the power and versatility of this data structure. Motivated by applications of equality saturation, they provide a generic and efficient framework equipped with term extraction, based on an extensible class analysis. 

Egraphs seem to be the perfect data-structure to address the challenges of quantifier reduction: they allow reasoning about infinitely many equivalent terms and  consider all available variable definitions and orderings at once. 
However, things are not always what they appear. The key to quantifier reduction is finding ground (i.e., variable-free) representatives for equivalence classes with free variables. This goes against existing techniques for term extraction since it requires selecting larger, rather than smaller, terms to be representatives. Selecting representatives carelessly makes term extraction diverge. To our surprise, this problem has not been studied so far.  In fact, \texttt{egg}~\cite{egg} incorrectly claims that any representative function can be used with its term extraction, while the implementation diverges. In this paper, we bridge this gap by providing necessary and sufficient conditions for a representative function to be admissible for term extraction as defined in~\cite{DBLP:conf/pldi/JoshiNR02,egg}. Furthermore, we extend extraction from terms to formulas to enable extracting a formula of the egraph.

Our main contribution is a new quantifier reduction algorithm, called \qel. 
Building on the term extraction described above, it is formulated as finding a representative function that maximizes the number of ground terms as representatives. Furthermore, it greedily attempts to represent variables without ground representatives in terms of other variables, thus further reducing the number of variables in the output. We show that \qel is complete relative to ground definitions entailed by the formula. Specifically, \qel guarantees to eliminate a variable if it is equivalent to a ground term.

Whenever an  application requires eliminating all free variables, incomplete techniques such as \qelite or \qel are insufficient. In this case, qelim is under-approximated using a Model-based Projection (MBP) that uses a model $M$ of a formula to guide under-approximation using equalities and variable definitions that are consistent with $M$.
In this paper, we show that MBP can be implemented using our new techniques for \qel together with the machinery from equality saturation. Just like SMT solvers use egraphs as glue to combine different theory solvers, we use egraphs as glue to combine projection for different theories. In particular, we give an algorithm for MBP in the combined theory of Arrays and Algebraic DataTypes~(ADTs). The algorithm uses insights from \qel to produce less under-approximate MBPs.

We implemented \qel and the new MBP using egraphs inside the state-of-art SMT solver \zthree~\cite{z3}. Our implementation (referred to as \ztg) replaces the existing \qelite and MBP. We evaluate our algorithms in two contexts. 
First, inside the QSAT~\cite{DBLP:conf/lpar/BjornerJ15} algorithm for quantified satisfiability. The performance of QSAT in \ztg is improved, compared to QSAT in \zthree, when ADTs are involved.
Second, we evaluate our algorithms inside the Constrained Horn Clause~(CHC) solver \spacer~\cite{DBLP:conf/cav/KomuravelliGC14}. Our experiments show that \spacer in \ztg solves many more benchmarks containing nested Arrays and ADTs.

\paragraph{Related Work.}
Quantifier reduction by variable substitution is widely used in quantified SMT~\cite{DBLP:conf/fmcad/GasconSDTJM14,DBLP:conf/lpar/BjornerJ15}.
To our knowledge,  we are the first to look at this problem semantically and provide an algorithm that guarantees that the variable is eliminated if the formula entails that it has a ground definition.

Term extraction for egraphs comes from equality saturation~\cite{DBLP:conf/pldi/JoshiNR02,10.1145/1480881.1480915}.
The \texttt{egg} Rust library~\cite{egg} is a recent implementation of equality saturation that supports rewriting and term extraction. However, we did not use \texttt{egg} because we  integrated \qel within Z3 and built it using Z3 data structures instead.

Model-based projection was first introduced for the \spacer CHC solver for LIA and LRA~\cite{DBLP:conf/cav/KomuravelliGC14} and extended to the theory of Arrays~\cite{DBLP:conf/fmcad/KomuravelliBGM15} and ADTs~\cite{DBLP:conf/lpar/BjornerJ15}. Until now, it was implemented by syntactic rewriting. Our egraph-based MBP implementation  is less sensitive to syntax and, more importantly, allows for combining MBPs of multiple theories for MBP of the combination. As a result, our MBP is more general and less model dependent. Specifically, it requires fewer model equalities and produces more general under-approximations than~\cite{DBLP:conf/fmcad/KomuravelliBGM15,DBLP:conf/lpar/BjornerJ15}.  

\medskip
\noindent
\textit{Outline.} The rest of the paper is organized as follows. \cref{sec:background} provides  background. \cref{sec:egraphs-extra} introduces term extraction, extends it to formulas, and characterizes representative-based term extraction for egraphs.
\cref{sec:qel} presents \qel, our algorithm for fast quantifier reduction that is relatively complete.
\cref{sec:mbp} shows how to compute MBP combining equality saturation and the ideas from \cref{sec:qel} for the theories of ADTs and Arrays.
All algorithms have been implemented in \zthree and evaluated in \cref{sec:evaluation}. \condappendix{Proofs are deferred to the appendix.}
 
\section{Background}\label{sec:background}
We assume the reader is familiar with multi-sorted first-order logic (FOL) with
equality and the theory of equality with uninterpreted functions~(EUF) (for an
introduction see, e.g.~\cite{barrett2018satisfiability}).
We use
$\eq$ to denote the designated logical equality symbol. 
For simplicity of presentation, we assume that the FOL signature $\Sigma$ contains
only functions (i.e., no predicates) and constants (i.e., 0-ary functions).
To represent predicates, we assume the FOL signature has a designated sort $\mathsf{Bool}$, and
two $\mathsf{Bool}$ constants $\top$ and $\bot$, representing true, and false
respectively. We then use $\mathsf{Bool}$-valued functions to represent predicates, using
$P(a) \eq \top$ and $P(a) \eq \bot$ to mean that $P(a)$ is true or false,
respectively. Informally, we continue to write $P(a)$ and $\neg P(a)$ as a
syntactic sugar for $P(a) \eq \top$ and $P(a) \eq \bot$, respectively. We use
lowercase letters like $a$, $b$ for constants, and $f$, $g$ for functions, and
uppercase letters like $P$, $Q$ for $\mathsf{Bool}$ functions that represent
predicates.
We denote by $\psi^{\exists}$ the existential closure of $\psi$.

\paragraph{Quantifier Elimination (\qelim).} 
Given a quantifier-free~(QF) formula $\varphi$ with free variables $\Vec{v}$,
\textit{quantifier elimination} of $\varphi^\exists$ is the problem of finding a QF formula $\psi$ with no free variables such that $\psi \equiv \varphi^\exists$.
For example, a \qelim of $\exists a\cdot (a \eq x \land f(a) > 3)$ is $f(x) > 3$; and, 
there is no \qelim of $\exists x \cdot (f(x) > 3)$, because it is 
impossible to restrict $f$ to have ``at least one value in its range that is
greater than 3'' without a quantifier.

\paragraph{Model Based Projection (MBP).} Let $\varphi$ be a formula with free variables $\vars$, and $M$ a model of $\varphi$. A \emph{model-based projection} of $\varphi$ relative to $M$ is a QF formula $\psi$ such that $\psi \limp \varphi^\exists$ and $M \models \psi$. That is, $\psi$ has no free variables, is an under-approximation of $\varphi$, and satisfies the designated model $M$, just like $\varphi$. MBP is used by many algorithms to under-approximate qelim, when the computation of qelim is too expensive or, for some reason, undesirable. 

\paragraph{Egraphs.}
An egraph is a well-known data structure to compactly represent a set of terms
and an equivalence relation on those terms~\cite{DBLP:conf/focs/NelsonO77}. Throughout the paper, we assume that graphs have an ordered successor relation
and use $n[i]$ to denote the $i$th successor (child) of a node $n$. An out-degree
of a node $n$, $\degree(n)$, is the number of edges leaving $n$.
Given a node $n$, $\texttt{parents}(n)$ denotes the set of nodes with an
outgoing edge to $n$ and $\children(n)$ denotes the set of nodes with an
incoming edge from $n$.

\begin{definition} \label{def:egraph2}
  Let $\Sigma$ be a first-order logic signature. An \emph{egraph} is a tuple \allowbreak\mbox{$\egtuple$},~where
\begin{enumerate}
  \item[(a)] $\langle \tgnodes, \tgedges \rangle$ is a directed acyclic graph, 
  \item[(b)]
  \tglabel maps nodes to function symbols in $\Sigma$ or logical variables, and 
  \item[(c)] $\tgroot :
  \tgnodes \mapsto \tgnodes$ maps a node to its \emph{root} such that the relation
  $\rho_{\tgroot} \eqdef \{ (n, n') \mid \tgroot(n) = \tgroot(n') \}$ is an
  equivalence relation on $N$ that is \emph{closed under congruence}: $(n,n') \in \rho_{\tgroot}$ whenever $n$ and $n'$ 
  are \emph{congruent} under $\tgroot$, 
  i.e., whenever $\tglabel(n) =   \tglabel(n')$,  $\degree(n) = \degree(n') > 0$, and, 
  $\forall  1 \leq i \leq \degree(n) \cdot (n[i],n'[i]) \in \rho_{\tgroot}$.
  \end{enumerate}
\end{definition}

Given an egraph $G$, the \emph{class} of a node $n \in G$, $\classof(n) \eqdef
\rho_{\tgroot}(n)$, is the set of all nodes that are equivalent to $n$.
The term
of $n$, $\ntt(n)$, with $\tglabel(n) = f$ is $f$ if $\degree(n) = 0$ and
$f(\ntt(n[1]),\ldots, \ntt(n[\degree(n)]))$, otherwise. We assume that the terms of different nodes are different, and refer to a node $n$ by its term.

\begin{figure}[t]
  \centering
  \scalebox{0.9}{$\varphi_{\ref{fig:egraph-example}}(x,y,z) \eqdef z \eq
  \mathit{read}(a,x) \land k + 1 \eq \mathit{read}(a,y) \land x \eq y \land 3
  > z$}\\
  \rule{7cm}{.2mm}
  \\[-1mm]
  \centering
  \scalebox{0.9}{
  \begin{tikzpicture}[thick, node distance={13mm}, main/.style = {draw, circle, minimum height=7mm}]
\node[main] (1) {$>$};
    \node[main] (21) [left of=1] {$\top$};
    \node[main] (2) [below left of=1,xshift=+4mm] {$3$};
    \node[main] (3) [below right of=1,xshift=-4mm] {$z$};
    \node[main,ellipse] (4) [right of=3] {$\mathit{read}$};
    \node[main] (5) [below left of=4,xshift=+4mm] {$a$};
    \node[main] (6) [below right of=4,xshift=-4mm] {$x$};
    \node[main,ellipse] (7) [right of=4,xshift=+3mm] {$\mathit{read}$};
    \node[main] (8) [right of=6,xshift=-2mm] {$y$};
    \node[main] (9) [right of=7, xshift=+4mm] {$+$};
    \node[main] (10) [below left of=9, xshift=+4mm] {$k$};
    \node[main] (11) [below right of=9, xshift=-4mm] {$1$};

\node [left of=21,xshift=+8mm,yshift=+2mm] {$^{(0)}$};
    \node [left of=1,xshift=+8mm,yshift=+2mm] {$^{(1)}$};
    \node [left of=2,xshift=+8mm,yshift=+2mm] {$^{(2)}$};
    \node [left of=3,xshift=+8mm,yshift=+2mm] {$^{(3)}$};
    \node [right of=4,xshift=-6mm,yshift=+2mm] {$^{(4)}$};
    \node [right of=7,xshift=-6mm,yshift=+2mm] {$^{(5)}$};
    \node [right of=9,xshift=-8mm,yshift=+2mm] {$^{(6)}$};
    \node [left of=5,xshift=+8mm,yshift=+2mm] {$^{(7)}$};
    \node [right of=6,xshift=-8mm,yshift=+2mm] {$^{(8)}$};
    \node [right of=8,xshift=-8mm,yshift=+2mm] {$^{(9)}$};
    \node [right of=10,xshift=-8mm,yshift=+2mm] {$^{(10)}$};
    \node [right of=11,xshift=-8mm,yshift=+2mm] {$^{(11)}$};

\draw[->] (1) -- (2);
    \draw[->] (1) -- (3);
    \draw[->] (4) -- (5);
    \draw[->] (4) -- (6);
    \draw[->] (7) -- (8);
    \draw[->] (7) -- (5);
    \draw[->] (9) -- (10);
    \draw[->] (9) -- (11);

\draw[dashed,->,red] (1) -- (21);
    \draw[dashed,->,red] (8) -- (6);
    \draw[dashed,->,red] (4) -- (3);
    \draw[dashed,->,red] (7) to [out=120, in=30,looseness=1] (3);
    \draw[dashed,->,red] (9) to [out=150, in=60] (3);
  \end{tikzpicture}
}
\caption{Example egraph of $\varphi_{\ref{fig:egraph-example}}$.\label{fig:egraph-example}}
\end{figure}
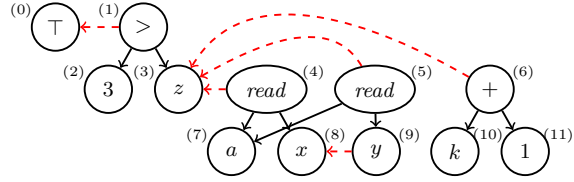

An example of an egraph $\egtuple$ is shown in \cref{fig:egraph-example}. A
symbol $f$ inside a circle depicts a node $n$ with label $\tglabel(n) = f$,
solid black and dashed red arrows depict $\tgedges$ and $\tgroot$, respectively.
The order of the black arrows from left to right defines the order of the
children.
In our examples, we refer to a specific node $i$ by its number using $\node(i)$
or its term, e.g., $\node(k + 1)$.
A node $n$ without an outgoing red arrow is its own root. A set of nodes
connected to the same node with red edges forms an equivalence class.
In this example, $\tgroot$ defines the equivalence classes $\{\node(3), \node(4), \node(5), \node(6)\}$,
$\{\node(8), \node(9)\}$, and a class for each of the remaining nodes. Examples of some terms in $G$ are $\ntt(\node(9)) = y$ and $\ntt(\node(5)) =
\mathit{read}(a,y)$.

\paragraph{An Egraph of a Formula.}
We consider formulas that are conjunctions of equality literals (recall that we represent predicate applications by equality literals). 
Given a formula \mbox{$\varphi~\eqdef~(t_1 \eq u_1 \land \cdots \land t_k \eq u_k)$}, an egraph from $\varphi$ is built (following the standard procedure~\cite{DBLP:conf/focs/NelsonO77}) by creating nodes for each $t_i$ and
$u_i$, recursively creating nodes for their subexpressions, and merging the
classes of each pair $t_i$ and $u_i$, computing the congruence closure for
$\tgroot$.
We write $\egraph(\varphi)$ for an egraph of $\varphi$
constructed via some deterministic procedure based on the recipe above.
\cref{fig:egraph-example} shows an $\egraph(\varphi_1)$ of $\varphi_{\ref{fig:egraph-example}}$.
The equality $z \eq \mathit{read}(a,x)$ is captured by $\node(3)$ and $\node(4)$
belonging to the same class (i.e., red arrow from $\node(4)$ to $\node(3)$).
Similarly, the equality $x \eq y$ is captured by a red arrow from $\node(9)$ to $\node(8)$.
Note that by congruence, $\varphi_{\ref{fig:egraph-example}}$ implies $\mathit{read}(a,x) \eq
\mathit{read}(a,y)$, which, by transitivity, implies that $k + 1 \eq
\mathit{read}(a,x)$. In \cref{fig:egraph-example}, this corresponds to red
arrows from $\node(5)$ and $\node(6)$ to $\node(3)$.
The predicate application $3 > z$ is captured by the red arrow from $\node(1)$
to $\node(0)$. From now on, we omit $\top$ and $\bot$ and the corresponding
edges from figures to avoid clutter.

\begin{figure}[t]
  \centering
  \scalebox{0.9}{$\varphi_{\ref{fig:interpret}}(x,y) \eqdef \ueq(c, f(x)) \land \ueq(d,  f(y)) \land \ueq(x, y)$}\\
  \rule{7cm}{.2mm}
  \\[1mm]
  \centering
  \scalebox{0.9}{
  \begin{subfigure}{0.32\textwidth}
    \begin{tikzpicture}[thick, node distance={10mm}, main/.style = {draw, circle,minimum height=6mm}]
      \node[main] (1) {$c$};
      \node[main] (2) [right of=1] {$f$};
      \node[main] (3) [right of=2] {$d$};
      \node[main] (4) [right of=3] {$f$};
\node[main] (6) [below of=2] {$x$};
      \node[main] (7) [below of=4] {$y$};
      \node[main] () [draw=white,above right of=3,xshift=-2.5mm] {};

      \draw[->] (2) to (6);
      \draw[->] (4) to (7);

      \draw[dashed,->,red] (1) to [out=30, in=120,looseness=.7] (3);
      \draw[dashed,->,red] (2) -- (3);
      \draw[dashed,->,red] (4) -- (3);
\draw[dashed,->,red] (6) -- (7);
    \end{tikzpicture}
    \vfill
    \caption{$G_a$, interpreting $\ueq$ as $\eq$.\label{fig:egraph-eq}}
  \end{subfigure} \hfill
\begin{subfigure}{0.32\textwidth}
    \begin{tikzpicture}[thick, node distance={10mm}, main/.style = {draw, circle,minimum height=6mm}]
        \node[main] (1) {$c$};
        \node[main] (2) [right of=1] {$f$};
        \node[main] (3) [right of=2] {$d$};
        \node[main] (4) [right of=3] {$f$};
\node[main] (6) [below of=2] {$x$};
        \node[main] (7) [below of=4] {$y$};
        \node[main] (8) [above right of=1,xshift=-2.5mm] {$\ueq$};
        \node[main] (9) [above right of=3,xshift=-2.5mm] {$\ueq$};

        \node[main] (11) [below of=3,yshift=+2.5mm] {$\ueq$};

        \draw[->] (2) to (6);
        \draw[->] (4) to (7);

        \draw[->] (8) to (1);
        \draw[->] (8) to (2);

        \draw[->] (9) to (3);
        \draw[->] (9) to (4);

        \draw[->] (11) to (6);
        \draw[->] (11) to (7);
\end{tikzpicture}
    \caption{$G_b$, not interpreting $\ueq$.\label{fig:egraph-ueq}}
  \end{subfigure} \hfill
  \begin{subfigure}{0.34\textwidth}
    \begin{tikzpicture}[thick, node distance={10mm}, main/.style = {draw, circle,minimum height=6mm}]
         \node[main] (1) {$c$};
        \node[main] (2) [right of=1] {$f$};
        \node[main] (3) [right of=2] {$d$};
        \node[main] (4) [right of=3] {$f$};
\node[main] (6) [below of=2] {$x$};
        \node[main] (7) [below of=4] {$y$};
        \node[main] (8) [above right of=1,xshift=-2.5mm] {$\ueq$};
        \node[main] (9) [above right of=3,xshift=-2.5mm] {$\ueq$};

        \node[main] (11) [below of=3,yshift=+2.5mm] {$\ueq$};

        \draw[->] (2) to (6);
        \draw[->] (4) to (7);

        \draw[->] (8) to (1);
        \draw[->] (8) to (2);

        \draw[->] (9) to (3);
        \draw[->] (9) to (4);

        \draw[->] (11) to (6);
        \draw[->] (11) to (7);
        \draw[dashed,->,red] (1) to [out=30, in=120,looseness=.7] (3);
        \draw[dashed,->,red] (2) -- (3);
        \draw[dashed,->,red] (4) -- (3);
\draw[dashed,->,red] (6) to [out=330, in=210,looseness=.7] (7);
    \end{tikzpicture}
    \caption{$G_c$, combining $(a)$ and $(b)$.\label{fig:egraph-eq-ueq}}
  \end{subfigure}
  }
  \vspace{-2mm}
  \caption{Different egraph interpretations for $\varphi_{\ref{fig:interpret}}$.\label{fig:interpret}}
\end{figure}
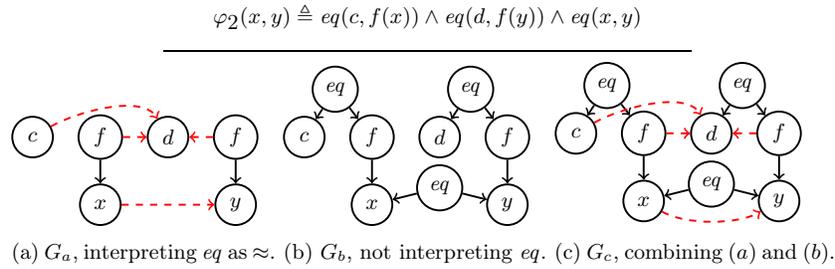

\vspace*{-2mm}
\paragraph{Explicit and Implicit Equality.} Note that egraphs represent equality
implicitly by placing nodes with equal terms in the same equivalence class. Sometimes,
it is necessary to represent equality explicitly, for example, when using
egraphs for equality-aware rewriting (e.g., in \texttt{egg}~\cite{egg}). To represent equality
explicitly, we introduce a binary $\mathsf{Bool}$ function $\ueq$ and write $\ueq(a, b)$
for an equality that has to be represented explicitly. We change the $\egraph$
algorithm to treat $\ueq(a, b)$ as both a function application, and as a logical
equality $a \eq b$: when processing term $\ueq(a, b)$, the algorithm
both adds $\ueq(a, b)$ to the egraph, and merges the nodes for $a$ and $b$ into one class.
For example, \cref{fig:interpret} shows three different interpretations of a formula
$\varphi_{\ref{fig:interpret}}$ with equality interpreted: implicitly (as
in~\cite{DBLP:conf/focs/NelsonO77}), explicitly (as in~\cite{egg}), and both
implicitly and explicitly (as in this paper).

\section{Extracting Formulas from Egraphs}
\label{sec:egraphs-extra}

Egraphs were proposed as a decision procedure for EUF~\cite{DBLP:conf/focs/NelsonO77} -- a setting in which converting an egraph back to a formula, or  \emph{extracting}, is irrelevant. Term extraction has been studied in the context of equality saturation and term rewriting~\cite{DBLP:conf/pldi/JoshiNR02,egg}. However, existing literature presents extraction as a heuristic, and, to the best of our knowledge, has not been exhaustively explored.
In this section, we fill these gaps in the literature and extend extraction from terms to formulas.

\paragraph{Term Extraction.} We begin by recalling how to extract the term of a node. The function $\toexpr$ (node-to-term) in \cref{alg:eg-to-cube} does an extraction parametrized by a representative function $\tgreps : N \mapsto N$ (same as in~\cite{egg}). A function \tgreps assigns each class a unique representative node (i.e., nodes in the same class are mapped to the same representative) so that  $\rho_{\tgroot} = \rho_{\tgreps}$. The function $\toexpr$ extracts a term of a node recursively, similarly to $\ntt$, except that the representatives of the children of a node are used instead of the actual children. We refer to terms built in this way by $\nttrepr{n, \tgreps}$ and omit \tgreps when it is clear~from~the~context.

As an example, consider $\tgreps_1 \eqdef \{\node(3), \node(8))\}$ for \cref{fig:egraph-example}. For readability, we denote representative functions by sets of nodes that are the class representatives, omitting $\node(\top)$ that always represents its class, and omitting all singleton classes. Thus, $\tgreps_1$ maps all nodes in $\classof(\node(3))$ to $\node(3)$, nodes in $\classof(\node(8))$ to $\node(8)$, nodes in $\classof(\node(\top))$ to $\node(\top)$, and all singleton classes to themselves. For example, $\toexpr(\node(5))$ extracts $\mathit{read}(a,x)$, since $\node(9)$ has as representative $\node(8)$.

\newcommand{\lits}{\ensuremath{\mathit{Lits}}}

\begin{figure}[t]
  \small
  \centering
  \scalebox{0.9}{
  \begin{minipage}[t]{0.5\textwidth}
    $\egraph::\toformula(\tgreps,\filter)$
    \begin{algorithmic}[1]
    \State $\lits := \emptyset$
    \For{$r = \tgreps(r) \in N$}\label{ln:opfmlz3:nodes}
    \State $t := \toexpr(r,\tgreps)$
    \For {$n \in (\classof(r) \setminus r)$}
    \If{$n \not\in \filter$}
    \State $\lits := \lits \cup \{t \eq \toexpr(n,\tgreps)\}$ \label{l:eqs}
    \EndIf
    \EndFor
    \EndFor
    \State \Return $\bigwedge \lits$
\algstore{end-toformula}
  \end{algorithmic}
\end{minipage}\hspace{0.5in}
\begin{minipage}[t]{0.5\textwidth}
  $\egraph::\toexpr(n,\tgreps)$
  \begin{algorithmic}[1]
    \algrestore{end-toformula}
    \State $f := \tglabel[n]$
    \If{$\degree(n) = 0$}
    \State \Return $f$
    \Else
    \For{$i \in [1, \degree(n)]$}
    \State $\mathit{Args}[i] := \toexpr(\tgreps(n[i]), \tgreps)$\label{ln:opfmlz3:reprchildren}
    \EndFor
    \State \Return $f(\mathit{Args})$
    \EndIf
  \end{algorithmic}
\end{minipage}
  }
  \vspace{-2mm}
\caption{Producing formulas from an egraph.\label{alg:eg-to-cube}}
\end{figure}

\paragraph{Formula Extraction.} Let $G = \egraph(\varphi)$ be an egraph of some formula $\varphi$. A formula $\psi$
is a \emph{formula of} $G$, written $\formula(G,\psi)$, if $\psi^\exists \equiv \varphi^\exists$. 

\Cref{alg:eg-to-cube} shows an algorithm $\toformula(\tgreps, \filter)$
to compute a formula $\psi$ that satisfies $\formula(G, \psi)$ for a given egraph $G$.
In addition to $\tgreps$, $\toformula$ is parameterized by a set of nodes $\filter \subseteq \tgnodes$ to exclude\footnote{The set $\filter$ affects the result, but for this section, we restrict to the case of $\filter \eqdef \emptyset$.}.
To produce the equalities corresponding to the classes, for each representative $r$, for each
$n \in (\classof(r) \setminus \{r\})$ the output formula has a literal $\nttrepr{r}
\eq \nttrepr{n}$. 
For example, using $\tgreps_1$ for the egraph in \cref{fig:egraph-example}, we obtain
for $\classof(\node(8))$, $(x \eq y)$; for $\classof(\node(3))$, $(z \eq \mathit{read}(a,x) \land z \eq \mathit{read}(a,x) \land z \eq k + 1)$; and for $\classof(\node(0))$, $(\top \eq 3 > z)$.
The final result (slightly simplified) is:
$ x \eq y \land z \eq \mathit{read}(a,x) \land z \eq k + 1 \land 3 > z$.

Let $G = \egraph(\varphi)$ for some formula $\varphi$. Note that, $\psi$ computed by $\toformula$ is not syntactically the same as $\varphi$. That is, $\toformula$ is not an inverse of $\egraph$. Furthermore, since $\toformula$
commits to one representative per class, it is limited in what formulas it can generate. For example, since $x \eq y$ is in $\varphi_{\ref{fig:egraph-example}}$, for any
$\tgreps$, $\varphi_{\ref{fig:egraph-example}}$ cannot be the result of $\toformula$, because the output can contain only one of $\mathit{read}(a,x)$ or $\mathit{read}(a,y)$.

\paragraph{Representative Functions.} 
The representative function is instrumental for determining the terms that appear in the extracted formula. To illustrate the importance of representative choice, consider the formula $\varphi_{\ref{fig:tgreps-ex}}$ of \cref{fig:tgreps-ex} and its egraph $G_4 = \egraph(\varphi_{\ref{fig:tgreps-ex}})$. For now, ignore the blue dotted lines. For $\tgreps_{\ref{fig:tgreps-exa}}$, \toformula obtains $\psi_a \eqdef (x \eq g(6)\land f(x) \eq 6 \land y \eq 6)$.
For $\tgreps_{\ref{fig:tgreps-exb}}$, \toformula produces $\psi_b \eqdef (g(6) \eq x \land f(g(6)) \eq 6 \land y \eq 6)$. In some applications (like qelim considered in this paper) $\psi_b$ is preferred to $\psi_a$: simply removing the literals $g(6) \eq x$ and $y \eq 6$ from $\psi_b$ results in a formula equivalent to $\exists x, y \cdot
\varphi_{\ref{fig:tgreps-ex}}$ that does not contain variables.
Consider a third representative choice $\tgreps_{\ref{fig:tgreps-exc}}$, for node $\node(1)$, $\toexpr$ does not terminate: to produce a term for $\node(1)$, a term for $\node(3)$, the representative of its child, $\node(2)$, is required. Similarly to produce a term for $\node(3)$, a term for the representative of its child node $\node(5)$, $\node(1)$, is necessary. Thus, none of the terms can be extracted with $\tgreps_{\ref{fig:tgreps-exc}}$.

For extraction, representative functions $\tgreps$ are either provided explicitly or implicitly (as in~\cite{egg}), the latter by associating a cost to nodes and/or terms and letting the representative be a node with minimal cost. However, observe that not all costs guarantee that the chosen $\tgreps$ can be used (the computation does not terminate). For example, the ill-defined $\tgreps_{\ref{fig:tgreps-exc}}$ from above is a \repf that satisfies the cost function that assigns function applications cost 0 and variables and constants cost $1$.
A commonly used cost function is term AST size, which is sufficient to ensure termination of $\toexpr(n,\tgreps)$.

We are thus interested in characterizing representative functions motivated by two observations: not every cost function guarantees that $\nttrepr{n}$ terminates; and the kind of representative choices that are most suitable for \qelim ($\tgreps_{\ref{fig:tgreps-exb}}$) cannot be expressed over term AST size.

\begin{figure}[t]
  \centering
  \scalebox{0.9}{
    $\varphi_{\ref{fig:tgreps-ex}}(x,y) \eqdef y \eq f(x) \land x \eq g(y) \land f(x) \eq 6$}\\[-0.05in] 
  \rule{7cm}{.2mm}
  \\
  \hspace*{-2mm}
  \scalebox{0.9}{
    \centering
  \begin{subfigure}{0.33\textwidth}
  \begin{tikzpicture}[thick, node distance={9mm}, main/.style = {draw, circle,minimum height=5mm}]
    \node[main] (1) {$g$};
    \node[main] (2) [below of=1] {$y$};
    \node[main] (3) [right of=2] {$f$};
    \node[main] (4) [below of=3] {$x$};
    \node[main] (5) [right of=3] {$6$};
    \node [right of=1,xshift=-4.5mm,yshift=+2mm] {$^{(1)}$};
    \node [right of=2,xshift=-4.5mm,yshift=+2mm] {$^{(3)}$};
    \node [right of=3,xshift=-4.5mm,yshift=+2mm] {$^{(4)}$};
    \node [right of=4,xshift=-4.5mm,yshift=+2mm] {$^{(5)}$};
    \node [left of=2,xshift=4.5mm,yshift=+2mm] {$^{(2)}$};
\draw[->] (1) to (2);
\draw[->] (3) to (4);
    \draw[dashed,->,red] (1) to [out=230, in=180, looseness=1] (4);
    \draw[dashed,->,red] (2) -- (3);
    \draw[dashed,->,red] (5) -- (3);
\draw[dotted,->,blue] (1) to [out=330, in=100, looseness=1] (5);
    \draw[dotted,->,blue] (3) to [out=230, in=150, looseness=1] (4);
    \draw[dotted,->,white] (3) to [out=230, in=180, looseness=2.2] (1);
  \end{tikzpicture}
  \caption{$\tgreps_{\ref{fig:tgreps-exa}} \eqdef \{\node(4), \node(5) \}$\label{fig:tgreps-exa}}
  \end{subfigure}
\begin{subfigure}{0.33\textwidth}
  \begin{tikzpicture}[thick, node distance={9mm}, main/.style = {draw, circle,minimum height=5mm}]
    \node[main] (1) {$g$};
    \node[main] (2) [below of=1] {$y$};
    \node[main] (3) [right of=2] {$f$};
    \node[main] (4) [below of=3] {$x$};
    \node[main] (5) [right of=3] {$6$};
    \node [right of=1,xshift=-4.5mm,yshift=+2mm] {$^{(1)}$};
    \node [right of=2,xshift=-4.5mm,yshift=+2mm] {$^{(3)}$};
    \node [right of=3,xshift=-4.5mm,yshift=+2mm] {$^{(4)}$};
    \node [right of=4,xshift=-4.5mm,yshift=+2mm] {$^{(5)}$};
    \node [left of=2,xshift=4.5mm,yshift=+2mm] {$^{(2)}$};
\draw[->] (1) to (2);
\draw[->] (3) to (4);
    \draw[dashed,->,red] (1) to [out=230, in=180, looseness=1] (4);
    \draw[dashed,->,red] (2) -- (3);
    \draw[dashed,->,red] (5) -- (3);
\draw[dotted,->,white] (3) to [out=230, in=150, looseness=1] (4);
    \draw[dotted,->,blue] (1) to [out=330, in=100, looseness=1] (5);
    \draw[dotted,->,blue] (3) to [out=230, in=180, looseness=2.2] (1);
  \end{tikzpicture}
  \caption{$\tgreps_{\ref{fig:tgreps-exb}} \eqdef \{\node(4), \node(1) \}$\label{fig:tgreps-exb}}
  \end{subfigure}
\begin{subfigure}{0.33\textwidth}
    \begin{tikzpicture}[thick, node distance={9mm}, main/.style = {draw, circle,minimum height=5mm}]
      \node[main] (1) {$g$};
      \node[main] (2) [below of=1] {$y$};
      \node[main] (3) [right of=2] {$f$};
      \node[main] (4) [below of=3] {$x$};
      \node[main] (5) [right of=3] {$6$};
      \node [right of=1,xshift=-4.5mm,yshift=+2mm] {$^{(1)}$};
      \node [right of=2,xshift=-4.5mm,yshift=+2mm] {$^{(3)}$};
      \node [right of=3,xshift=-4.5mm,yshift=+2mm] {$^{(4)}$};
      \node [right of=4,xshift=-4.5mm,yshift=+2mm] {$^{(5)}$};
      \node [left of=2,xshift=4.5mm,yshift=+2mm] {$^{(2)}$};
\draw[->] (1) to (2);
\draw[->] (3) to (4);
      \draw[dashed,->,red] (1) to [out=230, in=180, looseness=1] (4);
      \draw[dashed,->,red] (2) -- (3);
      \draw[dashed,->,red] (5) -- (3);
\draw[dotted,->,white] (3) to [out=230, in=150, looseness=1] (4);
      \draw[dotted,->,blue] (1) to [out=330, in=100, looseness=1] (3);
      \draw[dotted,->,blue] (3) to [out=230, in=180, looseness=2.2] (1);
    \end{tikzpicture}
    \caption{$\tgreps_{\ref{fig:tgreps-exc}} \eqdef \{\node(3), \node(1) \}$\label{fig:tgreps-exc}}
    \end{subfigure}
  }
\caption{Egraphs of $\varphi_{\ref{fig:tgreps-ex}}$ with $G_\tgreps$.\label{fig:tgreps-ex}}
\end{figure}

\begin{definition}\label{def:validrepr}
Given an egraph $\egtuple$, a representative function $\tgreps : N \to N$ is \emph{admissible for $G$} if 
\begin{enumerate}
    \item[(a)] \tgreps assigns a unique representative per
class, 
\item[(b)] $\rho_\tgroot = \rho_\tgreps$, and
\item[(c)] the graph $G_\tgreps$ is acyclic, where $G_\tgreps = \langle \tgnodes, E_{\tgreps} \rangle$ and $ E_{\tgreps} \eqdef \{ (n, \tgreps(c)) \mid c \in \texttt{children}(n), n
\in \tgnodes\}$.
\end{enumerate}
\end{definition}

Dotted blue edges in the graphs of \cref{fig:tgreps-ex} show the corresponding $G_\tgreps$. Intuitively, for each node $n$, all reachable nodes in $G_{\tgreps}$ are the nodes whose \toexpr term is necessary to produce the $\nttrepr{n}$. Observe that $G_{\tgreps_{\ref{fig:tgreps-exc}}}$ has a cycle, thus, $\tgreps_{\ref{fig:tgreps-exc}}$ is not admissible.

\begin{reptheorem}{admissible}
\label{lemma:repr-termination}
  Given an egraph $G$ and a \repf \tgreps, the function $G.\toformula(\tgreps,\emptyset)$ terminates with result $\psi$ such that $\formula(G,\psi)$ iff $\tgreps$ is admissible
  for $G$.
\end{reptheorem}

To the best of our knowledge, \cref{lemma:repr-termination} is the first complete characterization of all terms of a node that can be obtained by extraction based on class representatives (via describing all admissible \tgreps, note that the number is finite). This result contradicts~\cite{egg}, where it is claimed to be possible to extract a term of a node for any cost function. The counterexample is $\tgreps_{\ref{fig:tgreps-exc}}$.
Importantly, this characterization allows us to explore representative functions outside those in the existing literature, which, as we show in the next section, is key for \qelim.

\section{Quantifier Reduction} \label{sec:qel}

Quantifier reduction is a relaxation of quantifier elimination: given two formulas $\varphi$ and $\psi$ with free variables \vars and \varsp, respectively, $\psi$ is a
\emph{quantifier reduction} of $\varphi$ if $\varsp
\subseteq \Vec{v}$ and $\varphi^{\exists} \equiv 
\psi^{\exists}$. 
If $\varsp$ is empty, then $\psi$ is a quantifier elimination of $\varphi^\exists$. 
Note that quantifier reduction is possible even when quantifier elimination is not (e.g., for EUF). We are interested in an efficient quantifier reduction algorithm (that can be used as pre-processing for \qelim), even if a complete \qelim is possible (e.g., for LIA). In this section, we present such an algorithm called \qel.

Intuitively, \qel is based on the well-known substitution rule: $(\exists x \cdot x \eq t \land \varphi) \equiv \varphi[x \mapsto t]$. A naive implementation of this rule, called \qelite in Z3, looks for syntactic definitions of the form $x \eq t$ for a variable $x$ and an $x$-free term $t$ and substitutes $x$ with $t$. While efficient, \qelite is limited because of: (a) dependence on syntactic equality in the formula (specifically, it misses implicit equalities due to transitivity and congruence); (b) sensitivity to the order in which variables are eliminated (eliminating one variable may affect available syntactic equalities for another); and (c) difficulty in dealing with circular~equalities~such~as~$x \eq f(x)$.

For example, consider the formula $\varphi_{\ref{fig:tgreps-ex}}(x, y)$ in \cref{fig:tgreps-ex}. Assume that $y$ is eliminated first using $y \eq f(x)$, resulting in 
$x \eq g(f(x)) \land f(x) \eq 6$. Now, $x$ cannot be eliminated since the only equality for $x$ is circular. Alternatively, assume that \qelite somehow noticed that by transitivity, $\varphi_{\ref{fig:tgreps-ex}}$ implies $y \eq 6$, and obtains $(\exists y \cdot \varphi_{\ref{fig:tgreps-ex}}) \eqdef x \eq g(6) \land f(x) \eq 6$. This time, $x \eq g(6)$ can be used to obtain $f(g(6)) \eq 6$ that is a \qelim of $\varphi_{\ref{fig:tgreps-ex}}^\exists$. Thus, both the elimination order and implicit equalities are crucial.

In \qel, we address the above issues by using an egraph data structure to concisely capture all implicit equalities and terms. Furthermore, egraphs allow eliminating multiple variables together, ensuring that a variable is eliminated if it is equivalent (explicitly or implicitly) to a ground term in the egraph.

\setlength{\intextsep}{5pt}

\begin{algorithm}[t]
  \caption{\qel\ -- Quantifier reduction using egraphs.}\label{alg:qe-lite}
  {\keywordfont Input:} A formula $\varphi$ with free variables $\vars$. \\ {\keywordfont Output:} A quantifier reduction of $\varphi$.
  \\[1mm]
  \scalebox{0.9}{
  \begin{minipage}{1\textwidth}
  \small
  $\mathit{\qel}(\varphi,\vars)$ \begin{algorithmic}[1]
    \State $G := \mathit{egraph}(\varphi)$ \label{l:qel1}
\State $\tgreps := G.\finddefs(\vars)$ \label{l:qel3}
    \State $\tgreps := G.\refinedefs(\tgreps,\vars)$ \label{l:qel3b}
    \State $\core := G.\findcore(\tgreps)$ \label{l:qel4}
    \State \Return $G.\toformula(\tgreps, G.\mathit{Nodes}() \setminus \core)$ \label{l:qel5}
  \end{algorithmic}
  \end{minipage}
  }
\end{algorithm}

Pseudocode for \qel is shown in \cref{alg:qe-lite}. Given an input
formula $\varphi$, \qel first builds its egraph $G$ (line~1). Then, it 
finds a representative function $\tgreps$ that maps variables to equivalent ground terms, as much as possible (line~2). Next, it further reduces the remaining free variables by refining $\tgreps$ to map each variable $x$ to an equivalent $x$-free (but not variable-free) term (line~\ref{l:qel3b}).
At this point, \qel is committed to the variables to eliminate. To produce the output, \findcore identifies the subset of the nodes of $G$, which we call \emph{core}, that must be considered in the output (line~\ref{l:qel4}). 
Finally, \toformula converts the core of $G$ to the resulting formula (line~\ref{l:qel5}). 
We show that the combination of these steps is even stronger than variable substitution.

To illustrate \qel, we apply it on $\varphi_{\ref{fig:egraph-example}}$ and its egraph $G$ from \cref{fig:egraph-example}. The function \finddefs returns  $\tgreps = \{\node(6),\node(8)\}$\footnote{Recall that we only show representatives of non-singleton classes.}. Node $\node(6)$ is the only node with a ground term in the equivalence class $\classof(\node(3))$. This corresponds to the definition $z \eq k + 1$.  Node $\node(8)$ is chosen arbitrarily since $\classof(\node(8))$ has no ground terms. There is no refinement possible, so $\refinedefs$ returns $\tgreps$. The core is $N  \setminus \{\node(3),\node(5),\node(9)\}$. Nodes $\node(3)$ and $\node(9)$ are omitted because they correspond to variables with definitions (under \tgreps), and $\node(5)$ is omitted because it is congruent to $\node(4)$ so only one of them is needed. Finally, $\toformula$ produces  $k + 1 \eq \mathit{read}(a,x) \land 3 > k + 1$. Variables $z$ and $y$ are eliminated.

In the rest of this section we present \qel in detail and \qel's key
properties.

\paragraph{Finding Ground Definitions.}\label{sec:choose-reps}
Ground variable definitions are found by selecting a \repf \tgreps that ensures that the maximum number of terms in the formula are rewritten into ground equivalent ones, which, in turn, means finding a ground definition for all variables that have one.

Computing a \repf \tgreps that is admissible and ensures finding ground definitions when they exist is not trivial.
Naive approaches for identifying ground terms, such as iterating arbitrarily over the classes and selecting a representative based on $\ntt(n)$ are not enough -- $\ntt(n)$ may not be in the output formula.
It is also not possible to make a choice based on $\nttrepr{n}$, since, in general, it cannot be yet computed (\tgreps is not known yet).

\begin{figure}[t]
  \centering
  \scalebox{0.9}{$\varphi_{\ref{ex:backtrack}}(x,y) \eqdef x \eq g(f(x)) \land
  y \eq h(f(y)) \land f(x) \eq f(y)$}\\
  \rule{7cm}{.2mm}
  \small
  \\[2mm]
  \scalebox{0.9}{
  \begin{subfigure}{0.36\textwidth}
    \centering
    \begin{tikzpicture}[thick, node distance={10mm}, main/.style = {draw, circle,minimum height=7mm}]
      \node[main] (1) {$g$};
      \node[main] (2) [below of=1] {$f$};
      \node[main] (3) [below of=2] {$x$};
      \node[main] (4) [right of=1] {$h$};
      \node[main] (5) [below of=4] {$f$};
      \node[main] (6) [below of=5] {$y$};

      \node [left of=1,xshift=+4mm,yshift=+3mm] {$^{(1)}$};
      \node [left of=2,xshift=+4mm,yshift=+3mm] {$^{(2)}$};
      \node [left of=3,xshift=+4mm,yshift=+3mm] {$^{(3)}$};
      \node [right of=4,xshift=-4mm,yshift=+3mm] {$^{(4)}$};
      \node [right of=5,xshift=-4mm,yshift=+3mm] {$^{(5)}$};
      \node [right of=6,xshift=-4mm,yshift=+3mm] {$^{(6)}$};

      \draw[->] (1) to (2);
      \draw[->] (2) to (3);
      \draw[->] (4) to (5);
      \draw[->] (5) to (6);

      \draw[dashed,->,red] (2) -- (5);
      \draw[dashed,->,red] (1) to [out=230, in=130, looseness=1] (3);
      \draw[dashed,->,red] (4) to [out=310, in=50, looseness=1] (6);

      \draw[dotted,->,blue] (2) to [out=60, in=300, looseness=1] (1);
      \draw[dotted,->,blue] (5) to [out=110, in=250, looseness=2] (4);
      \draw[dotted,->,blue] (1) -- (5);
      \draw[dotted,->,blue] (4) to [out=300, in=60, looseness=1] (5);
    \end{tikzpicture}
    \caption{$\tgreps_{\ref{fig:no-acyclic-choice}}\!=\!\{\node(1),\node(4),\node(5)\}$\label{fig:no-acyclic-choice}}
  \end{subfigure}
  \vline
    \begin{subfigure}{0.36\textwidth}
    \centering
    \begin{tikzpicture}[thick, node distance={10mm}, main/.style = {draw, circle,minimum height=7mm}]
      \node[main] (1) {$g$};
      \node[main] (2) [below of=1] {$f$};
      \node[main] (3) [below of=2] {$x$};
      \node[main] (4) [right of=1] {$h$};
      \node[main] (5) [below of=4] {$f$};
      \node[main] (6) [below of=5] {$y$};

      \node [left of=1,xshift=+4mm,yshift=+3mm] {$^{(1)}$};
      \node [left of=2,xshift=+4mm,yshift=+3mm] {$^{(2)}$};
      \node [left of=3,xshift=+4mm,yshift=+3mm] {$^{(3)}$};
      \node [right of=4,xshift=-4mm,yshift=+3mm] {$^{(4)}$};
      \node [right of=5,xshift=-4mm,yshift=+3mm] {$^{(5)}$};
      \node [right of=6,xshift=-4mm,yshift=+3mm] {$^{(6)}$};

      \draw[->] (1) to (2);
      \draw[->] (2) to (3);
      \draw[->] (4) to (5);
      \draw[->] (5) to (6);

      \draw[dashed,->,red] (2) -- (5);
      \draw[dashed,->,red] (1) to [out=230, in=130, looseness=1] (3);
      \draw[dashed,->,red] (4) to [out=310, in=50, looseness=1] (6);

      \draw[dotted,->,blue] (2) to [out=300, in=60, looseness=1] (3);
      \draw[dotted,->,blue] (5) to [out=300, in=65, looseness=1] (6);
      \draw[dotted,->,blue] (1) -- (5);
      \draw[dotted,->,blue] (4) to [out=300, in=60, looseness=1] (5);
    \end{tikzpicture}
    \caption{$\tgreps_{\ref{fig:only-leaves}}\!=\!\{\node(3),\node(6),\node(5)\}$\label{fig:only-leaves}}
  \end{subfigure}
\vline
  \begin{subfigure}{0.36\textwidth}
    \centering
    \begin{tikzpicture}[thick, node distance={10mm}, main/.style = {draw, circle,minimum height=7mm}]
      \node[main] (1) {$g$};
      \node[main] (2) [below of=1] {$f$};
      \node[main] (3) [below of=2] {$x$};
      \node[main] (4) [right of=1] {$h$};
      \node[main] (5) [below of=4] {$f$};
      \node[main] (6) [below of=5] {$y$};

      \node [left of=1,xshift=+4mm,yshift=+3mm] {$^{(1)}$};
      \node [left of=2,xshift=+4mm,yshift=+3mm] {$^{(2)}$};
      \node [left of=3,xshift=+4mm,yshift=+3mm] {$^{(3)}$};
      \node [right of=4,xshift=-4mm,yshift=+3mm] {$^{(4)}$};
      \node [right of=5,xshift=-4mm,yshift=+3mm] {$^{(5)}$};
      \node [right of=6,xshift=-4mm,yshift=+3mm] {$^{(6)}$};

      \draw[->] (1) to (2);
      \draw[->] (2) to (3);
      \draw[->] (4) to (5);
      \draw[->] (5) to (6);

      \draw[dashed,->,red] (2) -- (5);
      \draw[dashed,->,red] (1) to [out=230, in=130, looseness=1] (3);
      \draw[dashed,->,red] (4) to [out=310, in=50, looseness=1] (6);

      \draw[dotted,->,blue] (2) to [out=60, in=300, looseness=1] (1);
      \draw[dotted,->,blue] (5) to [out=300, in=65, looseness=1] (6);
      \draw[dotted,->,blue] (1) -- (5);
      \draw[dotted,->,blue] (4) to [out=300, in=60, looseness=1] (5);
    \end{tikzpicture}
    \caption{$\tgreps_{\ref{fig:qel-best-repr}}\!=\!\{\node(1),\node(6),\node(5)\}$\label{fig:qel-best-repr}}
  \end{subfigure}
  }
  \caption{Egraphs including \textcolor{blue}{$G_\tgreps$} of $\varphi_{\ref{ex:backtrack}}$.\label{ex:backtrack}}
\end{figure}

Admissibility raises an additional challenge since choosing a node that appears to be a definition (e.g., not a leaf) may cause cycles in $G_\tgreps$.
For example, consider $\varphi_{\ref{ex:backtrack}}$ of \cref{ex:backtrack}.  Assume that $\node(1)$ and $\node(4)$ are chosen as representatives of their equivalence classes. At this point, $G_\tgreps$ has two edges:  $\langle\node(5),\node(4)\rangle$ and $\langle\node(2), \node(1)\rangle$, shown by blue dotted lines in \cref{fig:no-acyclic-choice}. Next, if either $\node(2)$ or $\node(5)$ are chosen as representatives (the only choices in their class), then $G_\tgreps$ becomes cyclic (shown in blue in \cref{fig:no-acyclic-choice}).
Furthermore, backtracking on representative choices needs to be avoided if we are to find a \repf efficiently. 

\begin{algorithm}[t]
  \caption{Find definitions maximizing groundness.}\label{alg:ground-valid-repr}
  \small
  \centering
  \scalebox{0.9}{
  \begin{minipage}[t]{0.5\textwidth}
    $\egraph::\finddefs(\vars)$
    \begin{algorithmic}[1]
      \For{$n \in \tgnodes$}
      {$\tgreps(n) := \undefrepr$}
      \EndFor
      \State $\acq := \{\mathit{leaf}(n) \mid n \in N \land \mathit{ground}(n)\}$\label{ln:add-gr-leaf} \State $\tgreps := \processQ(\tgreps, \acq)$\label{ln:grndprocessq}
    \State $\acq := \{\mathit{leaf}(n) \mid n \in N\}$ \label{ln:add-leaf}\State $\tgreps := \processQ(\tgreps, \acq)$
    \State \Return $\tgreps$
    \algstore{find-rep}
  \end{algorithmic}
\end{minipage}\hspace{0.1in}
\begin{minipage}[t]{0.6\textwidth}
  $\egraph::\processQ(\tgreps, \acq)$
  \begin{algorithmic}[1]
    \algrestore{find-rep}
    \While{$\acq \neq \emptyset$}
    \State $n := \acq.\mathit{pop}()$
    \If{$\tgreps(n) \neq \undefrepr$}\label{ln:check-set-repr}
    {\kwf continue}
    \EndIf
    \For{$n' \in \classof(n)$ }
    $\tgreps(n') := n$\label{ln:chooserepr}
    \EndFor
\For{$n' \in \classof(n)$ }\label{ln:begin-check-parents}
    \For{$p \in \mathtt{parents}(n')$}
\If{$\forall c \in \children(p) \cdot \tgreps(c)\neq\undefrepr$}\label{ln:parentsel}
    \State $\acq.\mathit{push}(p)$
    \EndIf
    \EndFor
    \EndFor\label{ln:end-check-parents}
    \EndWhile
    \State \Return $\tgreps$
    \end{algorithmic}
  \end{minipage}
  }
\end{algorithm}

\cref{alg:ground-valid-repr} finds a \repf  $\tgreps$ while overcoming these challenges. To ensure that the computed \repf is admissible (without backtracking), \cref{alg:ground-valid-repr} selects representatives for each class using a ``bottom up'' approach. 
Namely, leaves cannot be part of cycles in $G_\tgreps$ because they have no outgoing edges. Thus, they can always be safely chosen as representatives. Similarly, a node whose children have already been assigned 
representatives in this way (leaves initially), 
will also never be part of a
cycle in $G_\tgreps$. Therefore, these nodes are also safe to be chosen as representatives. 

This intuition is implemented in $\finddefs$ by initializing $\tgreps$ to be undefined ($\undefrepr$) for all nodes, and maintaining a workset, \acq,
containing nodes that, if chosen for the remaining classes (under the current selection), maintain acyclicity of $G_\tgreps$.
The initialization of \acq includes leaves only. The specific choice of leaves ensures that ground definitions are preferred, and we return to it later. 
After initialization, the
function $\processQ$ extracts an element from \acq and sets it as the
representative of its class if the class has not been assigned yet
(lines~\ref{ln:check-set-repr} and~\ref{ln:chooserepr}).
Once a class representative has been chosen, on
lines~\ref{ln:begin-check-parents} to~\ref{ln:end-check-parents}, the parents of
all the nodes in the class such that all the children have been chosen (the
condition on line~\ref{ln:parentsel}) are added to \acq.
 
So far, we discussed how admissibility of $\tgreps$ is guaranteed. 
To also ensure that ground definitions are found whenever possible, 
we observe that a similar bottom up approach identifies terms that can be rewritten into ground ones.
This builds on the notion of constructively ground nodes, defined next.

A class $c$ is \emph{ground} if $c$ contains a \emph{constructively ground}, or \emph{c-ground} for short, node $n$, 
where a node $n$ is c-ground if either (a) $\ntt(n)$ is ground, or (b) $n$ is not a leaf and the class $\classof(n[i])$ of every child $n[i]$ is ground.
Note that nodes labeled by variables are never \cground.

In the example in \cref{fig:egraph-example}, $\classof(\node(7))$ and $\classof(\node(8))$ are not ground, because all their nodes represent variables; $\classof(\node(6))$ is ground because $\node(6)$ is c-ground. Nodes $\node(4)$ and $\node(5)$ are not \cground because the class of $\node(8)$ (a child of both nodes) is not ground. Interestingly, $\node(1)$ is \cground,
because $\classof(\node(3)) = \classof(\node(6))$ is ground, even though its term $3 > z$ is~not~ground.  

Ground classes and c-ground nodes are of interest because whenever $\varphi \models \ntt(n) \eq t$ for some node $n$ and ground term $t$, 
then $\classof(n)$ is ground, i.e., 
it contains a c-ground node, where c-ground nodes can be found recursively starting from ground leaves. 
Furthermore, the recursive definition ensures that when the aforementioned c-ground nodes 
are selected as representatives, the corresponding terms w.r.t. \tgreps are ground.

As a result, to maximize the ground definitions found, we are interested in finding an admissible \repf $\tgreps$ that is \emph{maximally ground}, which means that
for every node $n \in N$, if $\classof(n)$ is ground, then $\tgreps(n)$ is \cground. 
That means that \cground nodes are always chosen if they exist.

\begin{reptheorem}{thm:gr-defs-new}Let $G=\egraph(\varphi)$ be an egraph and $\tgreps$ an admissible \repf that is maximally ground. 
  For all $n \in N$, if $\varphi \models \ntt(n) \eq t$ for some ground term $t$, then $\tgreps(n)$ is c-ground and $\nttrepr{\tgreps(n)}$ is ground.
\end{reptheorem}

We note that not every choice of c-ground nodes as representatives results in an admissible \repf. 
For example, consider the formula $\varphi_{\ref{fig:tgreps-ex}}$ of \cref{fig:tgreps-ex} and its egraph. 
All nodes except for $\node(5)$ and $\node(2)$ are c-ground. However, a \tgreps with $\node(3)$ and $\node(1)$ as representatives is not admissible.
Intuitively, this is because the ``witness'' for c-groundness of $\node(1)$ in $\classof(\node(2))$ is $\node(4)$ and not $\node(3)$. Therefore, it is important to incorporate the selection of c-ground representatives 
into the bottom up procedure that ensures admissibility of $\tgreps$.

To promote \cground nodes over non \cground in the construction of an admissible \repf,
$\finddefs$ chooses representatives in two steps. First,
only the ground leaves are processed (line~\ref{ln:add-gr-leaf}). This ensures that \cground representatives are chosen
while guaranteeing the absence of cycles. Then, the remaining leaves are added
to \acq (line~\ref{ln:add-leaf}). This triggers representative selection
of the remaining classes (those that are not ground).

We illustrate $\finddefs$ with two examples. For
$\varphi_{\ref{fig:tgreps-ex}}$ of \cref{fig:tgreps-ex}, there is only one leaf
that is ground, $\node(4)$, which is added to \acq on line~\ref{ln:add-gr-leaf},
and \acq is processed. $\node(4)$ is chosen as representative and, as a
consequence, its parent $\node(1)$ is added to \acq. $\node(1)$ is chosen as
representative so $\node(3)$, even though added to the queue later, is not
chosen as representative, obtaining $\tgreps_{\ref{fig:tgreps-ex}b} = \{\node(4),\node(1)\}$.
For $\varphi_{\ref{ex:backtrack}}$ of \cref{ex:backtrack}, 
no nodes are added to \acq on line~\ref{ln:add-gr-leaf}. $\node(3)$ and $\node(6)$
are added on line~\ref{ln:add-leaf}. In $\processQ$, both are chosen as
representatives obtaining, $\tgreps_{\ref{fig:only-leaves}}$.

\cref{alg:ground-valid-repr} guarantees that $\tgreps$ is maximally ground.
Together with \Cref{thm:gr-defs-new}, this implies that all terms that can be rewritten into ground equivalent ones will be rewritten, which, in turn, means that for each variable that has a ground definition, its representative is one such definition.

\paragraph{Finding Additional (Non-ground) Definitions.}
At this point, \qel found ground definitions while avoiding cycles in $G_\tgreps$. However, this does not mean that as many variables as possible are eliminated. A variable can also be eliminated if it can be expressed as a function of other variables. This is not achieved by \finddefs. For example, in $\tgreps_{\ref{fig:only-leaves}}$ both variables are representatives, hence none is eliminated, even though,  
since $x \eq g(f(y))$, $x$ could be eliminated in $\varphi_5$ by rewriting $x$ as a function of $y$.
\cref{alg:var-elim} shows function $\refinedefs$ that refines maximally ground $\tgreps{s}$ to further find such definitions while keeping admissibility and ground maximality. This is done by greedily attempting to change class representatives if they are labeled with a variable.
$\refinedefs$ iterates over the nodes in the class checking if there is a different node that is not a variable and that does not create a cycle in $G_\tgreps$ (line~\ref{ln:makescycle}).
The resulting $\tgreps$ remains maximally ground because representatives of ground classes are not changed.

\begin{algorithm}[t]
  \caption{Refining \tgreps and building core.}\label{alg:var-elim}\label{alg:redundant}
  \small
  \centering
  \scalebox{0.9}{
  \begin{minipage}[t]{0.44\textwidth}
    $\egraph::\refinedefs(\tgreps, \vars)$
  \begin{algorithmic}[1]
    \For{$n \in N$}
    \If{$n = \tgreps(n)$ {\kwf and} $L(n) \in \vars$}
    \State $r := n$
    \For{$n' \in \classof(n) \setminus \{n\}$}
    \If{$L(n') \not\in \vars$}
    \If{{\kwf not} $\texttt{cycle}(n',\tgreps)$}
    \label{ln:makescycle}
    \State $\mathit{r} := n'$; 
    \State {\kwf break}
    \EndIf
    \EndIf
    \EndFor
    \For{$n' \in \classof(n)$}
    \State $\tgreps[n'] := r$
    \EndFor
\EndIf
    \EndFor
    \State \Return $\tgreps$
    \end{algorithmic}
\end{minipage}\hspace{0.1in}
\begin{minipage}[t]{.56\textwidth}
  $\egraph::\findcore(\tgreps,\vars)$
  \begin{algorithmic}[1]
    \State $\core := \emptyset$
    \For{$n \in \tgnodes$ s.t. $n = \tgreps(n)$} \State $\core := \core \cup \{n\}$
\For{$n' \in (\classof(n) \setminus n)$}
    \If{$\tglabel(n') \in \vars$}
    {\kwf continue}
\ElsIf{$\exists m \in \core \cdot m \text{ congruent with }n'$ \Statex \hspace*{5mm}} 
    \State {\kwf continue}
\EndIf
\State $\core := \core \cup \{n'\}$
    \EndFor
    \EndFor
    \State \Return $\core$
  \end{algorithmic}
\end{minipage}
  }
\end{algorithm}

For example, let us refine $\tgreps_{\ref{fig:only-leaves}} = \{\node(3),\node(6),\node(5)\}$ obtained for $\varphi_{\ref{ex:backtrack}}$.
Assume that $x$ is processed first. For $\classof(\node(x))$, changing the representative to $\node(1)$
does not introduce a cycle (see \cref{fig:qel-best-repr}), so $\node(1)$ is selected.
Next, for $\classof(\node(y))$, 
choosing $\node(4)$ causes $G_\tgreps$ to be cyclic since  $\node(1)$ was already chosen (\cref{fig:no-acyclic-choice}), so the representative of $\classof(\node(y))$ is not changed.
The final refinement is $\tgreps_{\ref{fig:qel-best-repr}} = \{\node(1),\node(6),\node(5)\}$.

At this point, \qel found a \repf $\tgreps$ with as many ground definitions as possible and attempted to refine $\tgreps$ to have fewer variables as representatives. Next, \qel finds a core of the nodes of the egraph, based on $\tgreps$, that will govern the translation of the egraph to a formula. While \tgreps determines the semantic rewrites of terms that enable variable elimination, it is the use of the core in the translation that actually eliminates them.

\paragraph{Variable Elimination Based on a Core.}\label{sec:redundant}
A \emph{core} of an egraph $\egtuple$ and a \repf $\tgreps$, is a subset of the nodes $N_c \subseteq N$ such that $\psi_\textit{c} = G.\toformula(\tgreps,N\setminus N_c)$ satisfies $\formula(G,\psi_\textit{c})$. 

\cref{alg:redundant} shows pseudocode for $\findcore$ that computes a core of an egraph for a given representative function. 
The idea is that non-representative nodes that are labeled by variables, as well as nodes congruent to nodes that are already in the core, need not be included in the core.
The former are not needed since we are only interested in preserving the existential closure of the output, while the latter are not needed since congruent nodes introduce the same syntactic terms in the output.
For example, for $\varphi_1$ and $\tgreps_1$, $\findcore$ returns $\core_1 = N_1 \setminus \{\node(3), \node(5), \node(9)\}$. Nodes $\node(3)$ and $\node(9)$ are excluded because they are labeled with variables; and node $\node(5)$ because it is congruent with $\node(4)$.

Finally, \qel produces a quantifier reduction by applying $\toformula$ with the computed $\tgreps$ and $\core$. Variables that are not in the core (they are not representatives) are eliminated --
this includes variables that have a ground definition. However,
\qel may eliminate a variable even if it is a representative (and thus it is in the core).
As an example, consider $\psi(x,y) \eqdef f(x) \eq f(y) \land x \eq y$, whose egraph $G$ contains 2 classes with 2 nodes each. The core $N_c$ relative to any admissible $\tgreps$ contains only one representative per class: in the $\classof(\node(x))$ because both nodes are labeled with variables, and in the $\classof(\node(f(x)))$ because nodes are congruent. In this case, $\toformula(\tgreps,N_c)$ results in $\top$ (since singleton classes in the core produce no literals in the output formula),
a quantifier elimination of $\psi$.
More generally, the variables are eliminated because none of them is reachable in $G_\tgreps$ from a non-singleton class in the core (only such classes contribute literals to the output).

We conclude the presentation of \qel by showing its output for our examples. For $\varphi_1$, \qel obtains \mbox{$(k + 1 \eq \mathit{read}(a,x) \land 3 > k + 1)$}, a quantifier reduction, using $\tgreps_1 = \{\node(3), \node(8))\} $ and $\core_1 = N_1 \setminus \{\node(3), \node(5), \node(9)\}$. For $\varphi_{\ref{fig:tgreps-ex}}$, \qel obtains $(6 \eq f(g(6)))$, a quantifier elimination, using $\tgreps_{\ref{fig:tgreps-ex}b} =
\{\node(4),\node(1)\}$, and  $\core_{4b} = N_{4} \setminus \{\node(3), \node(2)\}$. Finally, for $\varphi_{\ref{ex:backtrack}}$, \qel obtains $(y \eq h(f(y)) \land f(g(f(y))) \eq f(y))$, a quantifier reduction, using $\tgreps_{\ref{fig:qel-best-repr}}
= \{\node(1),\node(6),\node(5)\}$ and $\core_{5c} = N_5\setminus \{\node(3)\}$.

\paragraph{Guarantees of \qel.}
Correctness of \qel is straightforward. We conclude this section by providing two conditions that ensure that a variable is eliminated by \qel. The first condition guarantees that a variable is eliminated whenever a ground definition for it exists (regardless of the specific \repf and core computed by \qel). This makes  \qel \emph{complete relative to quantifier elimination based on ground definitions}.  
Relative completeness is an important property since it means that 
\qel is unaffected by variable orderings and syntactic rewrites, unlike \qelite.
The second condition, illustrated by $\psi$ above, depends on the specific \repf and core computed by \qel. 
\begin{reptheorem}{qelsummary}
\label{thm:varelim-summary}
  Let $\varphi$ be a QF conjunction of literals with free variables $\vars$, and let $v \in \vars$. 
    Let $G = \egraph(\varphi)$, $n_v$ the node in $G$ such that $L(n_v) = v$ and $\tgreps$ and \core computed by \qel. We denote by $\mathit{NS} = \{n \in \core \mid (\classof(n) \cap \core) \neq \{n\} \}$ the set of nodes from classes with two or more nodes in \core.
  If one of the following conditions hold, then $v$ does not appear in $\mathit{\qel}(\varphi,\vars)$:
  \begin{enumerate}
  \item[(1)] there exists a ground term $t$ s.t. $\varphi \models v \eq t$, or
  \item[(2)] $n_v$ is not reachable from any node in $\mathit{NS}$ in $G_\tgreps$.
  \end{enumerate}
\end{reptheorem}
As a corollary, if every variable meets one of the two conditions, then \qel finds a quantifier elimination.

This concludes the presentation of our quantifier reduction algorithm. Next, we show how \qel can be used to under-approximate quantifier elimination, which allows working with formulas for which \qel does not result in a \qelim.

\section{Model Based Projection Using QEL}\label{sec:mbp}
\begin{figure}[t]
\scalebox{0.9}{
  \begin{minipage}{0.6\textwidth}
    \centering
\begin{tabular}{l}
$\inferrule*[lab=\textsc{ElimWrRd1}, right=$M\models i \eq j$]{\varphi[\arread(\arwrite(t,i, v),j)]}{\varphi[v] \land i \eq j}$\\ \\
$\inferrule*[lab=\textsc{ElimWrRd2}, right=$M\models i \deq j$]{\varphi[\arread(\arwrite(t,i, v),j)]}{\varphi[\arread(t, j)]\land i \deq j}$
\end{tabular}
\caption{Two MBP rules from~\cite{DBLP:conf/fmcad/KomuravelliBGM15}. The notation $\varphi[t]$ means that $\varphi$ contains term $t$. The rules rewrite all occurrences of $\arread(\arwrite(t,i, v),j)$ with $v$ and $\arread(t, j)$, respectively. \label{fig:arr-rd-wr-mbp}}
\end{minipage}\hspace{0.4cm}
\begin{minipage}{0.45\textwidth}
  \small
  $\textit{ElimWrRd}$
  \begin{algorithmic}[1]
    \Function{\checkr}{$t$}
    \State \Return $t = \arread(\arwrite(s, i, v), j)$
    \EndFunction
  \Function{\applyr}{$t, M, G$}
  \If {$M \models i \eq j$}
  \State $G.\textit{assert}(i \eq j)$
  \State $G.\textit{assert}(t \eq v)$
  \Else
  \State $G.\textit{assert}(i \deq j)$
  \State $G.\textit{assert}(t \eq \arread(s, j))$
  \EndIf
  \EndFunction
  \end{algorithmic}
\caption{Adaptation of rules in \cref{fig:arr-rd-wr-mbp} using QEL
  API.\label{fig:arr-rd-wr-egraph}}
\end{minipage}
}
\end{figure}

Applications like model checking and quantified satisfiability require
efficient computation of under-approximations of quantifier
elimination.
They make use of model-based projection~(MBP) algorithms
to project variables that cannot be eliminated cheaply. 
Our \qel algorithm is efficient and relatively complete, but it does not guarantee to eliminate all variables. 
In this section, we use a model and theory-specific
projection rules to implement an MBP algorithm on~top~of~\qel. 

We focus on two important theories: Arrays and Algebraic
DataTypes~(ADT). They are widely used to encode program verification
tasks. Prior works separately develop MBP algorithms for
Arrays~\cite{DBLP:conf/fmcad/KomuravelliBGM15} and
ADTs~\cite{DBLP:conf/lpar/BjornerJ15}. Both MBPs were presented as a
set of syntactic rewrite rules applied until fixed point.

Combining the MBP algorithms for Arrays and ADTs is non-trivial 
because applying projection rules for one theory may produce terms of
the other theory. Therefore, separately achieving saturation in either
theory is not sufficient to reach saturation in the combined
setting. The MBP for the combined setting has to call both
MBPs, check whether either one of them produced terms that
can be processed by the other, and, if so, call the other
algorithm. This is similar to theory combination in SMT solving where
the core SMT solver has to keep track of different theory solvers and
exchange terms between them.

Our main insight is that egraphs can be used as a glue to combine MBP
algorithms for different theories, just like egraphs are used in SMT
solvers to combine satisfiability checking for different
theories. Implementing MBP using egraphs allows us to use the insights
from \qel to combine MBP with on-the-fly quantifier reduction to
produce less under-approximate formulas than what we get by syntactic
application of MBP rules. 

To implement MBP using egraphs, we implement all rewrite rules for MBP
in Arrays~\cite{DBLP:conf/fmcad/KomuravelliBGM15} and
ADTs~\cite{DBLP:conf/lpar/BjornerJ15} on top of egraphs. In the
interest of space, we explain the implementation of just a couple of
the MBP rules for Arrays\footnote{Implementation of all other rules is
similar. \condappendix{See \cref{sec:mbprules} for details.}}.

\cref{fig:arr-rd-wr-mbp} shows two Array MBP rules
from~\cite{DBLP:conf/fmcad/KomuravelliBGM15}: \textsc{ElimWrRd1} and
\textsc{ElimWrRd2}. Here, $\varphi$ is a formula with arrays and $M$
is a model for $\varphi$. Both rules rewrite terms which match the
pattern $\mathit{read}(\mathit{write}(t, i, v), j)$, where $t$, $i$,
$j$, $k$ are all terms and $t$ contains a variable to be
projected. \textsc{ElimWrRd1} is applicable when $M \models i \eq
j$. It rewrites the term $\mathit{read}(\mathit{write}(t, i, v), j)$
to $v$. \textsc{ElimWrRd2} is applicable when $M\not\models i \eq j$
and rewrites $\mathit{read}(\mathit{write}(t, i, v), j)$ to
$\mathit{read}(t, j)$.

\cref{fig:arr-rd-wr-egraph} shows the egraph implementation of \textsc{ElimWrRd1} and \textsc{ElimWrRd2}. The $\checkr(t)$ method
checks if $t$ syntactically matches $\mathit{read}(\mathit{write}(s,
i, v), j)$, where $s$ contains a variable to be
projected. The $\applyr(t)$ method assumes that $t$ is
$\mathit{read}(\mathit{write}(s, i, v), j)$. It first checks if
$M\models i \eq j$, and, if so, it adds $i\eq j$ and $t\eq v$
to the egraph $G$. Otherwise, if $M \not\models i \eq j$, $\applyr(t)$ adds a
disequality $i \deq j$ and an equality $t \eq \mathit{read}(s,
v)$ to $G$. That is, the egraph implementation of the rules only adds
(and does not remove) literals that capture the side condition and the
conclusion of the rule.

Our algorithm for MBP based on egraphs, \egraphmbp, is shown in \cref{alg:mbp}. It
initializes an egraph with the input
formula~(line~\ref{ln:egraphinit}), applies MBP rules until
saturation~(line~\ref{ln:applyrules}), and then uses the steps of
\qel~(lines~\ref{ln:mbprepr}--\ref{ln:mbpabs}) to generate the
projected formula.

Applying rules is as straightforward as iterating over all terms $t$
in the egraph, and for each rule $r$ such that $r.\checkr(t)$ is true,
calling $r.\applyr(t, M,
G)$~(lines~\ref{ln:ubegin}--\ref{ln:uend}). As opposed to the standard
approach based on formula rewriting, here the terms are \emph{not}
rewritten -- both remain. Therefore, it is possible to get into an
infinite loop by re-applying the same rules on the same terms over and
over again. 
To avoid this, \egraphmbp marks terms as
\emph{seen}~(line~\ref{ln:markuseen}) and avoids them in the next
iteration~(line~\ref{ln:avoiduseen}). Some rules in MBP are applied to
pairs of terms. For example, \textsc{Ackermann}\condappendix{~(defined in
\cref{sec:mbprules})} rewrites pairs of $\mathit{read}$ terms over the same
variable. This is different from usual applications where rewrite
rules are applied to individual expressions. Yet, it is easy to adapt
such pairwise rewrite rules to egraphs by iterating over pairs of
terms~(lines~\ref{ln:avoidpseen}--\ref{ln:markpseen}).

\egraphmbp does not apply MBP rules to terms that contain variables
but are already \cground~(line~\ref{ln:avoidcground}), which is sound
because such terms are replaced by ground terms in the
output~(\cref{thm:varelim-summary}). This prevents unnecessary
application of MBP rules thus allowing
\egraphmbp to compute MBPs that are closer to a quantifier elimination
(less model-specific).

Just like each application of a rewrite rule introduces a new term to
a formula, each call to the $\applyr$ method of a rule adds new terms
to the egraph. Therefore, each call to
$\mathit{ApplyRules}$~(line~\ref{ln:applyrules}) makes the egraph
bigger. 
However, provided that the original MBP combination is terminating,
the iterative application of $\mathit{ApplyRules}$ terminates as well
(due to marking).

Some MBP rules introduce new variables to the formula. \egraphmbp
computes $\tgreps$ based on both original and newly introduced
variables~(line~\ref{ln:mbprepr}). This allows \egraphmbp to eliminate
all variables, including non-Array, non-ADT variables, that are
equivalent to ground terms~(\cref{thm:varelim-summary}).

As mentioned earlier, \egraphmbp never removes terms while rewrite rules are
saturating. Therefore, after saturation, the egraph still contains all
original terms and variables. From soundness of the MBP rules, it
follows that after each invocation of $\applyr$, \egraphmbp creates an
under-approximation of $\varphi^\exists$ based on the model $M$. From
completeness of MBP rules, it follows that, after saturation, all
terms containing Array or ADT variables can be removed from the egraph
without affecting equivalence of the saturated egraph. Hence, when
calling $\toformula$, \egraphmbp removes all terms containing Array or
ADT variables~(line~\ref{ln:mbpabs}). This includes, in particular,
all the terms on which rewrite rules were applied, but potentially
more.

\begin{algorithm}[t]
  {\keywordfont Input:} A QF formula $\varphi$ with free variables
  $\Vec{v}$ all of sort $\textit{Array}(I, V)$ or $\textit{ADT}$,
a model $M \models \varphi^{\exists}$, and sets of rules $\textit{ArrayRules}$ and $\textit{ADTRules}$\\
  {\keywordfont Output:} A cube $\psi$
  s.t. $\psi^\exists \limp \varphi^\exists$, $M \models \psi^\exists$, and
  $\textit{vars}(\psi)$ are not Arrays or ADTs\\[2mm]
  \small
  \scalebox{0.9}{
  \begin{minipage}[t]{0.54\textwidth}
      $\egraphmbp(\varphi, \vars, M)$
          \begin{algorithmic}[1]
\State $G := \egraph(\varphi)$\label{ln:egraphinit}
      \State $p_1, p_2 := \top, \top;$ $\notcore, \notcore_p := \emptyset, \emptyset$
      \While{$p_1\lor p_2$}
      \State $p_1 := \textit{ApplyRules}(G, M, \textit{ArrayRules}, \notcore, \notcore_p)$\label{ln:applyrules}
      \State $p_2 := \textit{ApplyRules}(G, M, \textit{ADTRules}, \notcore, \notcore_p)$
      \EndWhile
        \State $\vars' := G.\textit{Vars}()$
        \State $\tgreps := G.\finddefs(\vars')$\label{ln:mbprepr}
        \State $\tgreps := G.\refinedefs(\tgreps, \vars')$        
        \State $\core := G.\findcore(\tgreps, \vars')$
        \State $\vars_e := \{v \in \vars' \mid \textit{is\_arr}(v)\lor \textit{is\_adt}(v)\}$
        \State $\core_e := \{n \in \core \mid \textit{gr}(\textit{term}(n), \vars_e)\}$
\State \Return $G.\toformula(\tgreps, G.\textit{Nodes}() \setminus\core_e)$\label{ln:mbpabs}
\algstore{end-applybrules}
    \end{algorithmic}
    \end{minipage}\hspace{0.2in}
    \begin{minipage}[t]{0.44\textwidth}
    $\mathit{ApplyRules}(G, M, R, \notcore, \notcore_p)$
  \begin{algorithmic}[1]
            \algrestore{end-applybrules}
\State $\textit{progress} := \bot$
    \State $N := G.\textit{Nodes}()$\label{ln:ubegin}
    \State $U := \{n \mid n \in N\setminus \notcore\}$\label{ln:avoiduseen}
    \State $T := \{\textit{term}(n) \mid n \in U \land{}$\label{ln:avoidcground}
    \Statex \hfill $(\mathit{is\_eq}(\mathit{term}(n)) \lor \neg \textit{\cground}(n))\}$
    \State $R_p := \{ r \in R\mid r.\mathit{is\_for\_pairs}()\}$
    \State $R_u := R\setminus R_p$
      \ForEach {$t \in T, r \in R_u$}\label{ln:forallterms}
        \If{$r.\checkr(t)$}\label{ln:checkrule}
        \State $r.\applyr(t, M, G)$\label{ln:applyr}
        \State $\textit{progress} := \top$
        \EndIf
        \EndFor\label{ln:uend}
        \State $\notcore := \notcore \cup N$\label{ln:markuseen}
        \State $N_p := \{ \langle n_1, n_2\rangle \mid n_1, n_2 \in N\}$\label{ln:pbegin}
        \State $T_p := \{\textit{term}(n_p)  \mid n_p \in N_p \setminus \notcore_p\}$\label{ln:avoidpseen}
        \ForEach {$t_p \in T_p, r \in R_p$}\label{ln:forallbterms}
        \If{$r.\checkr(p)$}\label{ln:collectbterms}
        \State $r.\applyr(p, M, G)$\label{ln:applybterms}
        \State $\textit{progress} := \top$\label{ln:btprogress}
        \EndIf
        \EndFor\label{ln:pend}
        \State $\notcore_p := \notcore_p \cup N_p$\label{ln:markpseen}
        \State \Return $\textit{progress}$
\end{algorithmic}
    \end{minipage}
    }
  \caption{\egraphmbp: an MBP using \qel. Here $\textit{gr}(t, \vars)$ checks
    whether term $t$ contains any variables in $\vars$ and $\mathit{is\_eq}(t)$ checks if $t$ is an equality literal.\label{alg:mbp}}
\end{algorithm}

We demonstrate our MBP algorithm on an example with nested ADTs and
Arrays. Let $\pairsort \eqdef \langle \intarr, \intsort\rangle$ be the
datatype of a pair of an integer array and an integer, and let $\pair:
\intarr\times \intsort \rightarrow \pairsort$ be its sole constructor
with destructors $\fst: \pairsort \rightarrow \intarr$ and $\snd:
\pairsort \rightarrow \intsort$. In the following, let $i$, $l$, $j$
be integers, $a$ an integer array, $p$, $p'$ pairs, and $\ap_1$,
$\ap_2$ arrays of pairs~($A_{\intsort\times\pairsort}$).~Consider~the~formula:
$$\varphi_{\mathit{mbp}}(p,a) \;\eqdef\; \mathit{read}(a, i) \eq i
\land p \eq \pair(a, l) \land \ap_2 \eq \mathit{write}(\ap_1, j,
p)\land p \deq p'$$ 
where $p$ and $a$ are free variables that we want
to project and all of $i, j, l, \ap_1, \ap_2, p'$ are constants that
we want to keep. MBP is guided by a model $M_{\mathit{mbp}}\models
\varphi_{\mathit{mbp}}$. To eliminate $p$
and $a$, \egraphmbp constructs the egraph of $\varphi_{\mathit{mbp}}$
and applies the MBP rules. In particular, it uses Array
MBP rules to rewrite the $\mathit{write}(\ap_1, j, p)$ term by adding
the equality $\mathit{read}(\ap_2, j) \eq p$ and merging it with the
equivalence class of $\ap_2 \eq \mathit{write}(\ap_1, j, p)$. It then
applies ADT MBP rules to deconstruct the equality $p \eq
\mathit{pair}(a, l)$ by creating two equalities $\fst(p) \eq a$ and
$\snd(p) \eq l$. Finally, the call to $\toformula$ produces\vspace{-0.1in}\begin{multline*}
  \arread(\fst(\arread(\ap_1, j)), i) \eq i \land \snd(\arread(\ap_1, j)) \eq l \land{}\\
  \arread(\ap_2, j) \eq \pair(\fst(\arread(\ap_1, j)), l) \land{}\\
  \ap_2 \eq \arwrite(\ap_1, j, \arread(\ap_2, j))\land \arread(\ap_2, j) \deq p'
\end{multline*}
The output is easy to understand by tracing it back to the
input. For example, the first literal is a rewrite of the literal
$\mathit{read}(a, i) \eq i$ where $a$ is represented with $\fst(p)$ and $p$ is
represented with $\mathit{read}(\ap_1, j)$. While the interaction of these
rules might seem straightforward in this example, the MBP
implementation in \zthree fails to project $a$ in this example because of the
multilevel nesting.

Notably, in this example, the \cground computation during projection
allows \egraphmbp not splitting on the disequality $p\deq p'$ based
on the model.  While ADT MBP rules eliminate disequalities by
using the model to split them, \egraphmbp benefits from the fact that,
after the application of Array MBP rules, the class of $p$ becomes
ground, making $p\deq p'$ \cground.  Thus, the \cground computation
allows \egraphmbp to produce a formula that is less approximate than
those produced by syntactic application of MBP rules. In fact, in this
example, a quantifier elimination is obtained~(the model
$M_{\mathit{mbp}}$ was not used).

In the next section, we show that our improvements to MBP translate to
significant improvements in a CHC-solving procedure that relies on MBP.

\section{Evaluation}\label{sec:evaluation}

\newcommand{\yicesQS}{\textsc{YicesQS}\xspace}

We implement \qel~(\cref{alg:qe-lite}) and \egraphmbp~(\cref{alg:mbp}) inside \zthree~\cite{z3} (version 4.12.0), a state-of-the-art SMT solver. Our implementation (referred to as \ztg), is publicly available on GitHub\footnote{Available at \url{https://github.com/igcontreras/z3/tree/qel-cav23}.}. \ztg replaces \qelite with
\qel, and the existing MBP with \egraphmbp.

We evaluate \ztg using two solving tasks.
Our first evaluation is on the QSAT algorithm~\cite{DBLP:conf/lpar/BjornerJ15} for
checking satisfiability of formulas with alternating
quantifiers. In QSAT, \zthree uses both \qelite and MBP to
under-approximate quantified formulas.  
We compare three QSAT implementations: the existing version in \zthree with the default \qelite and MBP; the existing version in \zthree in which \qelite and MBP are replaced by our egraph-based algorithms, \ztg; and the QSAT implementation in \yicesQS\footnote{Available at \url{https://github.com/disteph/yicesQS}.}, based on the \textsc{Yices}~\cite{DBLP:conf/cav/Dutertre14} SMT solver. 
During the evaluation, we found a bug in QSAT implementation of \zthree and fixed it\footnote{Available at \url{https://github.com/igcontreras/z3/commit/133c9e438ce}.}. The fix resulted in \zthree solving over $40$ sat instances and over $120$ unsat instances more than before. In the following, we use the fixed version of \zthree.

We use benchmarks in the theory of (quantified) LIA and LRA
from SMT-LIB~\cite{BarFT-SMTLIB,BarST-SMT-10}, with alternating
quantifiers.  LIA and LRA are the only tracks in which \zthree uses the QSAT tactic by default. To make our experiments more comprehensive, we also consider two modified variants of the LIA and LRA benchmarks, where we add some non-recursive ADT variables to the benchmarks. Specifically, we wrap all existentially quantified arithmetic variables using a record type ADT and unwrap them whenever they get used\footnote{The modified benchmarks are available at \url{https://github.com/igcontreras/LIA-ADT} and \url{https://github.com/igcontreras/LRA-ADT}.}. Since these benchmarks are similar to the original, we force \zthree to use the QSAT tactic on them with a \texttt{tactic.default\_tactic=qsat} command line option. 

\Cref{tab:qsat1} summarizes the results for the SMT-LIB benchmarks. In LIA, both \ztg and \zvanilla solve all benchmarks in under a minute, while \yicesQS is unable to solve many instances.
In LRA, \yicesQS solves all instances with very good performance. 
\zvanilla is able to solve only some benchmarks, and our \ztg performs similarly to \zvanilla. 
We found that in the LRA benchmarks, the new algorithms in \ztg 
are not being used since there are not many equalities in the formula, and no equalities are inferred during the run of QSAT. Thus, any differences between \zthree and \ztg are due to inherent randomness of the solving process.

\Cref{tab:qsat2} summarizes the results for the categories of mixed ADT and arithmetic. \yicesQS is not able to compete because it does not support ADTs. As expected, \ztg solves many more instances than \zvanilla.

\definecolor{Gray}{gray}{0.85}
\newcolumntype{a}{>{\columncolor{Gray}}r}

\begin{table}[t]
\begin{minipage}[t]{0.5\textwidth}
\centering
    \scalebox{.85}{
    \setlength{\tabcolsep}{1pt}
  \begin{tabular}{l r @{\hspace{0.1in}} a a @{\hspace{0.1in}} r r @{\hspace{0.1in}} r r}
        \toprule
       \rowcolor{white}
       \multirow{2}{*}{Cat.} & \multirow{2}{*}{Count} & \multicolumn{2}{c}{\ztg} & \multicolumn{2}{c}{\zvanilla} & \multicolumn{2}{c}{\yicesQS}\\
       & & \textsc{sat} & \textsc{unsat} & \textsc{sat} & \textsc{unsat} & \textsc{sat} & \textsc{unsat}\\
    \midrule
         \textbf{LIA}  & 416 & 150 & 266 & 150 & 266 & 107 & 102\\
         \textbf{LRA}  & 2\,419 & 795 & 1\,589 & 793 & 1\,595 & 808 & 1\,610 \\
\bottomrule\\
    \end{tabular}
    }
\caption{Instances solved within 20 minutes by different implementations. Benchmarks are quantified \textbf{LIA} and
  \textbf{LRA} formulas from SMT-LIB~\cite{BarFT-SMTLIB}.\label{tab:qsat1}}
\end{minipage}\hspace{.1in}
\begin{minipage}[t]{0.44\textwidth}
\centering
    \scalebox{.85}{
    \setlength{\tabcolsep}{1pt}
  \begin{tabular}{l r @{\hspace{0.1in}} a a @{\hspace{0.1in}} r r }
        \toprule
       \rowcolor{white}
       \multirow{2}{*}{Cat.} & \multirow{2}{*}{Count} & \multicolumn{2}{c}{\ztg} & \multicolumn{2}{c}{\zvanilla} \\
       & & \textsc{sat} & \textsc{unsat} & \textsc{sat} & \textsc{unsat} \\
    \midrule
\textbf{LIA-ADT} & 416 & 150 & 266 & 150 & 56  \\
         \textbf{LRA-ADT} & 2\,419 & 757 & 1\,415 & 196 & 964 \\
        \bottomrule\\
    \end{tabular}
    }
\caption{Instances solved within 60 seconds for our handcrafted benchmarks.\label{tab:qsat2}}
\end{minipage}
\vspace{-0.4in}
\end{table}

The second part of our evaluation shows the efficacy of \egraphmbp
for Arrays and ADTs~(\cref{alg:mbp}) in the context of CHC-solving. \zthree uses
both \qelite and MBP inside the CHC-solver \spacer~\cite{DBLP:conf/cav/KomuravelliGC14}. Therefore, we compare
\zvanilla and \ztg on CHC problems containing Arrays and ADTs. We use two sets of benchmarks to test out the efficacy of our MBP. The  benchmarks in the first set were generated for verification of
Solidity smart contracts~\cite{DBLP:conf/cav/AltBHS22}~(we exclude benchmarks with non-linear arithmetic, they are not supported by \spacer). These benchmarks have a very complex structure that nests ADTs and Arrays. Specifically, they contain both ADTs of Arrays, as well as Arrays of ADTs. This
makes them suitable to test our \egraphmbp. Row~1 of \cref{tab:mbp}
shows the number of instances solved by \zthree (\spacer) with and without \egraphmbp. \ztg solves $29$ instances more than \zvanilla. Even
though MBP is just one part of the overall \spacer algorithm, we see
that for these benchmarks, \egraphmbp makes a significant impact on \spacer.
Digging deeper, we find that many of these instances come from the category called \emph{abi}~(row~$2$ in \cref{tab:mbp}). \ztg solves all of
these benchmarks, while \zvanilla fails to solve~$20$ of them. We traced
the problem down to the MBP implementation in \zvanilla: it fails to eliminate all variables, causing runtime exception. In contrast, \egraphmbp eliminates all variables successfully, allowing \ztg to solve these benchmarks.

We also compare \ztg with \eld~\cite{DBLP:conf/fmcad/HojjatR18}, a state-of-the-art CHC-solver that is particularly
effective on these benchmarks. \ztg solves almost as many instances as
\eld. Furthermore, like \zvanilla, \ztg is orders of magnitude faster
than \eld. Finally, we compare the performance of \ztg on Array benchmarks
from the CHC competition~\cite{chccomp}. \ztg is competitive with \zvanilla, solving $2$ additional safe instances and almost as many unsafe instances as \zvanilla~(row~$3$ of
\cref{tab:mbp}). Both \ztg and \zvanilla solve quite a few instances~more~than~\eld.

\begin{table}[t]
    \centering
    \scalebox{0.9}{
    \begin{tabular}{l r@{\hspace{0.2in}} a a @{\hspace{0.2in}} r r@{\hspace{0.2in}}r r}
        \toprule
       \rowcolor{white}
             \multirow{2}{*}{Cat.} & \multirow{2}{*}{Count} & \multicolumn{2}{c}{\ztg} & \multicolumn{2}{c}{\zvanilla}& \multicolumn{2}{c}{\eld}\\
       & & \textsc{sat} & \textsc{unsat} & \textsc{sat} & \textsc{unsat} & \textsc{sat} & \textsc{unsat}\\
    \midrule
         \textbf{Solidity} & 3\,468 & 2\,324 & 1\,133 & 2\,314 & 1\,114 & \textbf{2\,329} & \textbf{1\,134}\\
         \textbf{ 	\rotatebox[origin=c]{180}{$\Lsh$} abi} & 127 & 19 & 108 & 19 & 88 & 19 &108\\
         \textbf{LIA-lin-Arrays} & 488 & \textbf{214} & 72 & 212 & \textbf{75} & 147 & 68\\
        \bottomrule\\
    \end{tabular}
    }
    \caption{Instances solved within 20 minutes by \ztg, \zvanilla, and \eld. Benchmarks are CHCs from \textbf{Solidity}~\cite{DBLP:conf/cav/AltBHS22} and  CHC competition~\cite{chccomp}. The \textbf{abi} benchmarks are a subset of \textbf{Solidity} benchmarks.}
    \label{tab:mbp}
    \vspace{-0.3in}
\end{table}

Our experiments show the effectiveness of our \qel and \egraphmbp in different settings inside the state-of-the-art SMT solver \zthree. While we maintain performance on quantified arithmetic benchmarks, we improve \zthree's QSAT algorithm on quantified benchmarks with ADTs. On verification tasks, \qel and \egraphmbp help \spacer solve 30 new instances, even though MBP is only a relatively small part of the overall \spacer algorithm.

\section{Conclusion}
\label{sec:conclusion}
Quantifier elimination, and its under-approximation, Model-Based
Projection are used by many SMT-based decision procedures, including
quantified SAT and Constrained Horn Clause solving. Traditionally,
these are implemented by a series of syntactic rules, operating
directly on the syntax of an input formula. In this paper, we argue
that these procedures should be implemented directly on the
egraph data-structure, already used by most SMT solvers. This results
in algorithms that better handle implicit equality reasoning and
result in easier to implement and faster procedures. We justify this
argument by implementing quantifier reduction and MBP in \zthree using
egraphs and show that the new implementation translates into
significant improvements to the target decision procedures. Thus, our
work provides both theoretical foundations for quantifier reduction
and practical contributions to \zthree SMT-solver.
 
\subsubsection*{Acknowledgment} The research leading to these results has received funding from the
European Research Council under the European Union's Horizon 2020 research and innovation programme (grant agreement No [759102-SVIS]). This research was partially supported by the Israeli Science Foundation (ISF) grant No. 1810/18. We acknowledge the support of the Natural Sciences and Engineering Research Council of Canada (NSERC), MathWorks Inc., and the Microsoft Research PhD Fellowship.

\clearpage
\bibliography{faqe}
\bibliographystyle{splncs04}
\ifthenelse{\noappendix = 0}{\clearpage
\appendix
\section{Proofs}
In this Appendix, we present proof for our claims.

\subsection{Admissibility of \tgreps\label{app:validityofrepr}}

Throughout the section, we assume that $\egtuple$ is an egraph, $n, m \in
\tgnodes$ are nodes in $G$, $\tgreps$ is a representative function for $G$, and
that $G_{\tgreps} = \langle \tgnodes, \tgedges_{\tgreps}\rangle$ is the graph in
the \cref{def:validrepr}. We define the \emph{execution trace} of an execution
of a procedure as the sequence of procedure calls made during the execution.

\begin{lemma}\label{lemma:ntt}
  $\toexpr(n_k, \tgreps)$ is in the execution trace of $\toexpr(n,
  \tgreps)$ iff there is a path from $n$ to $n_k$ in $G_{\tgreps}$.
\end{lemma}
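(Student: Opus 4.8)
The plan is to reduce the entire statement to a single \emph{one-step} fact about $\toexpr$ and then close it under transitivity by induction in both directions. First I would inspect $\toexpr$ in \cref{alg:eg-to-cube}: a call $\toexpr(m,\tgreps)$ makes no recursive calls when $\degree(m)=0$, and otherwise makes exactly the calls $\toexpr(\tgreps(m[i]),\tgreps)$ for $i \in [1,\degree(m)]$ on line~\ref{ln:opfmlz3:reprchildren}. Since $m[1],\dots,m[\degree(m)]$ are precisely the elements of $\children(m)$, the set of nodes on which $\toexpr(m,\tgreps)$ directly recurses is $\{\tgreps(c) \mid c \in \children(m)\}$, which is exactly the set of out-neighbours of $m$ in $G_\tgreps$ by the definition of $E_\tgreps$ in \cref{def:validrepr}. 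This yields the key invariant: $\toexpr(m,\tgreps)$ \emph{directly} invokes $\toexpr(m',\tgreps)$ if and only if $(m,m') \in E_\tgreps$.

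With this correspondence the lemma becomes the statement that the transitive closure of ``directly invokes'' matches reachability in $G_\tgreps$, which I would establish by induction on each side. For the $(\Leftarrow)$ direction, given a path $n = m_0 \to m_1 \to \cdots \to m_l = n_k$ in $G_\tgreps$, I induct on $l$: the base case $l=0$ holds because $\toexpr(n,\tgreps)$ is the root of its own trace, and the inductive step applies the one-step correspondence to the edge $(m_{l-1},m_l)$ to conclude that the (already-present) call $\toexpr(m_{l-1},\tgreps)$ spawns $\toexpr(n_k,\tgreps)$. For the $(\Rightarrow)$ direction, I would use the fact that every call in the trace other than the root has a unique direct invoker occurring at strictly smaller recursion depth; starting from any occurrence of $\toexpr(n_k,\tgreps)$ I can therefore walk back through direct invokers until reaching the root $\toexpr(n,\tgreps)$, obtaining a finite chain whose consecutive pairs are edges of $G_\tgreps$ by the one-step correspondence. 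Reading this chain forward gives a path from $n$ to $n_k$.

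The hard part will be making ``appears in the execution trace'' precise when $\tgreps$ is \emph{not} admissible, i.e.\ when $G_\tgreps$ has a cycle and $\toexpr$ diverges. Under a strictly sequential reading of the trace, divergence of the recursive call on an earlier child could prevent a later sibling (and everything it would spawn) from ever being reached, which would falsify the $(\Leftarrow)$ direction. I would therefore fix the semantics up front, defining the execution trace as the set of all calls \emph{spawned} by the recursive structure --- equivalently, the call tree generated by the ``directly invokes'' relation --- independently of left-to-right evaluation order; this is the reading under which the lemma is both true and useful. With that convention the two inductions go through verbatim, and it is exactly the form needed downstream: a divergent execution yields an infinite call chain, hence (finitely many nodes) a repeated node, hence by the one-step correspondence a cycle in $G_\tgreps$, which is precisely how \cref{admissible} will consume this lemma.
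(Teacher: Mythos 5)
Your proof follows essentially the same route as the paper's: both directions reduce to the one-step fact that the direct recursive calls of $\toexpr(m,\tgreps)$ are exactly the $G_\tgreps$-out-neighbours of $m$, and then close under transitivity by induction on path length (for $\Leftarrow$) and on the call chain back to the root (for $\Rightarrow$). Your explicit fix of the trace semantics in the divergent case --- reading the trace as the call tree spawned by the recursion rather than a left-to-right sequential log --- is a point the paper leaves implicit but genuinely needs for the $\Leftarrow$ direction when $\tgreps$ is not admissible, which is precisely the setting where \cref{admissible} consumes this lemma.
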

\begin{proof}
  \emph{Only if direction.} By induction on the length of
  the path between $n$ and $n_k$ in $G_{\tgreps}$.

  \emph{Base case.} If the length is one, $n_k = \tgreps(n[1])$, the representative of the
  child of $n$. Since $n$ can not be of degree zero, $\toexpr(n,
  \tgreps)$ reaches line~\ref{ln:opfmlz3:reprchildren} and recurses on
  the representatives of all children of $n$. Since $n_k$ is one of
  them, $\toexpr(n_k, \tgreps)$ is in the execution trace of
  $\toexpr(n, \tgreps)$.

  \emph{Inductive step.} Assume that $n_j$ is reachable from $n$ in
  $G_{\tgreps}$ and that execution trace of $\toexpr(n, \tgreps)$
  contains $\toexpr(n_j, \tgreps)$. We prove that, if there exists an
  edge $(n_j, n_{j + 1})$ in $G_{\tgreps}$, execution trace of
  $\toexpr(n, \tgreps)$ contains $\toexpr(n_{j + 1}, \tgreps)$.
  Since there is an edge $(n_j, n_{j + 1})$, $n_{j + 1}$ is the
  representative of one of the children of $n_j$. Therefore,
  $\toexpr(n_{j}, \tgreps)$ reaches line~\ref{ln:opfmlz3:reprchildren}
  and calls $\toexpr(n_{j + 1}, \tgreps)$. Since the execution trace
  of $\toexpr(n, \tgreps)$ contains $\toexpr(n_j, \tgreps)$, it must
  also contain $\toexpr(n_{j + 1}, \tgreps)$.

  \emph{If direction.} We prove it by induction on the length of an
  execution trace.

  \emph{Base case.} The execution trace of $\toexpr(n, \tgreps)$ is
  of length 1 only if $n$ has exactly one child. Therefore, $n_k = n[1]$ is
  the representative of the only child of $n$. By
  \cref{def:validrepr}, $(n, n_k)\in E_{\tgreps}$.

  \emph{Inductive step.} 
  Our inductive
  hypothesis is that for all $j$ s.t. $1 < j \leq i$, $\toexpr(n_j,
  \tgreps)$ is in the execution trace and there is a path from $n$
  to $n_j$. We have to show that, if $\toexpr(n_{i + 1}, \tgreps)$ is
  in the execution trace, there is a path from $n$ to $n_{i +
    1}$. Since \cref{alg:eg-to-cube} recurses only on the
  representatives of children of a
  node~(line~\ref{ln:opfmlz3:reprchildren}), $\toexpr(n_{i + 1},
  \tgreps)$ is in the trace only if $n_{i + 1}$ is the representative
  of a child of a node $n_{j}$ where $1 \leq j \leq i$. By our
  inductive hypothesis, either $j = 1$ or there is already a path from
  $n$ to $n_j$. Since $n_{i + 1}$ is the representative of a child
  of $n_j$, by \cref{def:validrepr}, there is an edge from $n_j$ to
  $n_{i + 1}$. Hence, there is a path from $n$ to $n_{i + 1}$.  \qed
\end{proof}

\begin{lemma}
  If there is a path from $n$ to $n_k$ in $G_{\tgreps}$, then $n \neq
  \tgreps(n_k)$.
\end{lemma}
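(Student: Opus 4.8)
The plan is to argue by contradiction, converting the hypothesized equation $n = \tgreps(n_k)$ together with the path into a nonempty cycle of $G_{\tgreps}$, which is exactly what the acyclicity clause of admissibility (\cref{def:validrepr}(c)) forbids. So the proof I would give reduces the statement to a no-self-reachability property of a directed acyclic graph, once one pins down the role of representatives.

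First I would record the single structural fact that drives everything: \emph{every edge of $G_{\tgreps}$ ends at a representative}. Indeed, by construction each edge has the form $(m, \tgreps(c))$ with $c \in \children(m)$, and since $\tgreps$ sends every node to the designated representative of its class, it is idempotent, $\tgreps(\tgreps(c)) = \tgreps(c)$. Hence the target $\tgreps(c)$ is a fixed point of $\tgreps$, i.e.\ a representative. I would then apply this to the endpoint of the path: the paths considered here are nonempty (as in \cref{lemma:ntt}, they correspond to recursive calls and have length at least one), so $n_k$ is the head of the last edge of the path and is therefore a representative, giving $\tgreps(n_k) = n_k$. Consequently the goal $n \neq \tgreps(n_k)$ is equivalent to $n \neq n_k$.

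To finish, I would assume for contradiction that $n = \tgreps(n_k)$. By the previous paragraph this forces $n = n_k$, and then the path witnessing reachability from $n$ to $n_k$ is a nonempty closed walk from $n$ back to $n$, hence contains a cycle of $G_{\tgreps}$. This contradicts \cref{def:validrepr}(c), so $n \neq \tgreps(n_k)$.

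\textbf{Main obstacle.} The only genuinely non-bookkeeping point is that the argument \emph{crucially} uses acyclicity of $G_{\tgreps}$: the two steps that matter are ``the endpoint of a path is a representative'' (purely structural, always true) and ``a node cannot reach itself'' (true only for an admissible, i.e.\ acyclic, $\tgreps$). Without the acyclicity clause the statement fails -- consider the egraph of $x \eq f(x)$ with the node $f(x)$ chosen as representative of its two-element class: then $G_{\tgreps}$ contains the self-loop $(f(x), \tgreps(x)) = (f(x), f(x))$, and the length-one path from $f(x)$ to $f(x)$ refutes the conclusion since $\tgreps(f(x)) = f(x)$. Thus I would read the hypothesis as applying to an admissible $\tgreps$, and the content of the lemma is precisely that admissibility rules out such circular self-references.
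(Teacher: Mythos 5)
Your proof is correct and takes essentially the same route as the paper's: both argue by contradiction, observe that edge targets in $G_{\tgreps}$ are representatives (so $\tgreps(n_k)=n_k$, hence $n=n_k$), and conclude that the path closes into a cycle, contradicting clause (c) of \cref{def:validrepr}. Your added counterexample on $x \eq f(x)$ is a nice confirmation that the acyclicity clause is what carries the argument, but it is not needed for the proof itself.
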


\begin{proof}
  Let the path from $n$ to $n_k$ be $n_1 = n, n_2,\ldots, n_k$ where
  $\forall 1 \leq j < k \cdot \exists i \cdot n_{j+1} =
  \tgreps(n_j[i])$. For contradiction, assume that $n =
  \tgreps(n_k)$. By \cref{def:validrepr}(a), each
  equivalence class has exactly one representative. Therefore
  $\tgreps(n_k) = \tgreps(n) = n$. The path $n, \ldots, n_{k -
    1}, n$ forms a cycle in $G_{\tgreps}$.  \qed
\end{proof}

\begin{corollary}\label{cor:ub-ntt}
    The maximum length of a path in $G_{\tgreps}$ is the number of
    equivalence classes in $G$.
\end{corollary}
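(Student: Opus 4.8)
The plan is to bound the number of edges of an arbitrary path in $G_{\tgreps}$ by the number $m$ of equivalence classes of $G$, exploiting the fact that every node that is reached along a path in $G_{\tgreps}$ is necessarily a class representative. First I would fix a path $n_1, n_2, \ldots, n_k$ in $G_{\tgreps}$, whose length (number of edges) is $k-1$, and observe that by the definition of $E_{\tgreps}$ in \cref{def:validrepr} every edge of $G_{\tgreps}$ points to a node of the form $\tgreps(c)$, i.e.\ to a representative. Hence each of $n_2, \ldots, n_k$, being the target of an edge on the path, is a representative, so $\tgreps(n_j) = n_j$ for all $2 \le j \le k$.

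The next step is to show that these $k-1$ representatives lie in pairwise distinct classes. Fix $2 \le i < j \le k$. The subpath $n_i, \ldots, n_j$ is a path from $n_i$ to $n_j$ in $G_{\tgreps}$, so the preceding lemma (if there is a path from $a$ to $b$ then $a \neq \tgreps(b)$) applied with $a = n_i$ and $b = n_j$ gives $n_i \neq \tgreps(n_j) = n_j$. Since $n_i$ is itself a representative, $\tgreps(n_i) = n_i \neq n_j = \tgreps(n_j)$, and because \cref{def:validrepr}(a) assigns each class a unique representative, $n_i$ and $n_j$ must belong to different classes. Therefore $\classof(n_2), \ldots, \classof(n_k)$ are $k-1$ pairwise distinct classes of $G$, whence $k-1 \le m$. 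As the path was arbitrary, every path in $G_{\tgreps}$ has at most $m$ edges, establishing the stated upper bound on the maximum path length.

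The only real subtlety, and the step I expect to require the most care, is keeping the node/edge and representative/member distinctions straight. The start node $n_1$ need not be a representative, and its class may in fact coincide with the class of some later $n_j$ (a non-representative member reaching its own representative along the path), so $n_1$ must be excluded from the distinctness count; it is precisely because $n_2, \ldots, n_k$ are \emph{forced} to be representatives that distinctness of nodes upgrades to distinctness of classes, letting the bound be phrased in terms of classes rather than nodes. Via \cref{lemma:ntt}, this bound on path length is exactly what guarantees termination of $\toexpr$, since its recursion depth is then bounded by $m$.
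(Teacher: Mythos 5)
Your proof is correct and is essentially the argument the paper intends: \cref{cor:ub-ntt} is stated as an immediate consequence of the preceding lemma (a path from $a$ to $b$ implies $a \neq \tgreps(b)$), and you supply exactly the missing details — every edge target in $G_{\tgreps}$ is a fixed point of $\tgreps$, so the lemma forces $n_2,\ldots,n_k$ to be distinct representatives and hence to lie in pairwise distinct classes, bounding the edge count by the number of classes. Your remark about excluding the (possibly non-representative) start node is a correct and worthwhile refinement of the same approach, not a departure from it.
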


\begin{lemma}\label{lemma:toformula-ub}
  Let $M$ and $d$ respectively be the number of equivalence classes
  and maximum out degree in $G$. The complexity of $\toexpr(n,
  \tgreps)$ is $\mathcal{O}(M^d)$.
\end{lemma}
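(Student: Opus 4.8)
\textbf{Proof plan for \cref{lemma:toformula-ub}.}
The plan is to measure the running time of $\toexpr(n,\tgreps)$ by the number of recursive invocations it performs, since each invocation does only $\mathcal{O}(d)$ local work (reading the label, iterating over its at most $d$ children to recurse on their representatives, and assembling the returned term $f(\mathit{Args})$). By \cref{lemma:ntt}, the invocations appearing in the execution trace are exactly the calls $\toexpr(n_k,\tgreps)$ for which there is a path from $n$ to $n_k$ in $G_{\tgreps}$, and the number of times $\toexpr(n_k,\tgreps)$ is invoked is the number of \emph{distinct} such paths. Hence the total number of invocations equals the number of directed paths of $G_{\tgreps}$ that issue from $n$, and the entire analysis reduces to counting these paths in terms of the two parameters named in the statement.

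Next I would bound this count using the structure of $G_{\tgreps}$. First, \cref{cor:ub-ntt} states that every path in $G_{\tgreps}$ has length at most $M$, the number of equivalence classes: every non-initial vertex on such a path is a class representative, and admissibility of $\tgreps$ makes $G_{\tgreps}$ acyclic, so no representative repeats; this caps the depth of the recursion tree by $M$. Second, by \cref{def:validrepr} each vertex of $G_{\tgreps}$ has out-degree at most $\degree(\cdot)\le d$, since its outgoing edges point to the representatives of its at most $d$ children; this caps the branching of the recursion tree. I would then organize the count as a recursion-tree argument: let $\pi(v)$ be the number of paths of $G_{\tgreps}$ starting at $v$, note that $\pi(v)=1+\sum_{(v,w)\in E_{\tgreps}}\pi(w)$, and solve this recurrence in reverse topological order, substituting the depth bound $M$ from \cref{cor:ub-ntt} and the out-degree bound $d$ from \cref{def:validrepr} to obtain the claimed $\mathcal{O}(M^d)$.

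The main obstacle is precisely this counting step. Because $\toexpr$ as written performs no memoization, the same node is re-expanded once per path reaching it, so the naive ``distinct reachable nodes'' estimate ($\mathcal{O}(M)$) severely undercounts the work; the difficulty is to account for this multiplicity while keeping the dependence on $M$ and $d$ separated, which is exactly what solving the recurrence for $\pi$ achieves. A secondary, purely bookkeeping point is to confirm that the $\mathcal{O}(d)$ per-call local cost is dominated by (or can be folded into) the invocation count, so that the overall bound is governed by the number of paths through $G_{\tgreps}$ rather than by the per-node work, yielding the stated $\mathcal{O}(M^d)$ complexity for $\toexpr(n,\tgreps)$.
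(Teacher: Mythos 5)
Your proposal takes essentially the same route as the paper: the paper's proof is precisely the recursion-tree argument you describe, writing the recurrence $T(n) = \left(\sum_{i=1,\ldots,d} T(n[i])\right) + \mathcal{O}(1)$ and bounding the recursion depth by $M$ (via the acyclicity argument of \cref{cor:ub-ntt}) and the branching by $d$, so your path-counting recurrence $\pi(v)=1+\sum_{(v,w)\in E_{\tgreps}}\pi(w)$ is just a restatement of it. One caveat, which you inherit directly from the paper rather than introduce yourself: a recursion tree of height $M$ with branching factor $d$ has $\mathcal{O}(d^M)$ nodes, not $\mathcal{O}(M^d)$, so the final substitution step in both your argument and the paper's actually yields $\mathcal{O}(d^M)$ --- the bound as stated appears to swap base and exponent.
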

\begin{proof}
  Let $T(n)$ denote the time to run $\toexpr(n, \tgreps)$. $T(n) =
  (\Sigma_{i = 1,\ldots, d} T(n[i])) + \mathcal{O}(1)$. This forms a
  tree with height $M$, the maximum length of the path in
  $G_{\tgreps}$. The number of leaf nodes is $M^d$.  \qed
\end{proof}

\Cref{lemma:toformula-ub} states that acyclicity of $G_{\tgreps}$ is a sufficient condition for
termination of $\toformula$. The following lemma states that it is also a necessary condition.

\begin{lemma}\label{lemma:toformula-lb}
  For a node $n\in\tgnodes$, let $p$ be the longest path starting from
  $n$ in $G_\tgreps$. The complexity of $\toexpr(n, \tgreps)$ is
  $\Omega(\mathit{length}(p))$.
\end{lemma}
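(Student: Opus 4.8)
The plan is to bound the running time of $\toexpr(n, \tgreps)$ from below by the number of \emph{distinct} recursive invocations it makes, and then to exhibit at least $\mathit{length}(p)$ such invocations by reading off the witness path $p$ through \cref{lemma:ntt}. Concretely, I adopt the same cost model used implicitly in \cref{lemma:toformula-ub}, namely the recurrence $T(m) = \left(\sum_{i=1}^{\degree(m)} T(m[i])\right) + \Omega(1)$, so that the work of each invocation $\toexpr(m, \tgreps)$ charged to its own frame (excluding its sub-calls) is $\Omega(1)$. Under this model the total cost of $\toexpr(n, \tgreps)$ is at least a constant times the number of invocations appearing in its execution trace, so it suffices to show that the trace contains $\Omega(\mathit{length}(p))$ invocations.

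First I would fix the longest path $p = n_1, n_2, \ldots, n_{\ell+1}$ starting from $n = n_1$ in $G_\tgreps$, where $\ell = \mathit{length}(p)$ is its number of edges. Being a path, its vertices $n_1, \ldots, n_{\ell+1}$ are pairwise distinct. For each $j$ with $2 \le j \le \ell+1$, the prefix $n_1, \ldots, n_j$ is a nonempty path from $n$ to $n_j$ in $G_\tgreps$, so by the ``if'' direction of \cref{lemma:ntt} the invocation $\toexpr(n_j, \tgreps)$ occurs in the execution trace of $\toexpr(n, \tgreps)$. Since the nodes $n_2, \ldots, n_{\ell+1}$ are distinct, these invocations are pairwise distinct (they differ in their first argument), so the trace contains at least $\ell$ invocations. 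Combining this with the per-invocation lower bound yields $T(n) \ge c\,\ell = \Omega(\mathit{length}(p))$.

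This argument is essentially the dual of \cref{lemma:toformula-ub}, read now as a lower bound, so the delicate points are mostly bookkeeping. The one substantive thing to get right is that the $\ell$ invocations I count are genuine distinct contributions to the runtime rather than a single call counted repeatedly; this is exactly where distinctness of the vertices of a (simple) path is used, and it is what lets me turn ``occurs in the trace'' (all that \cref{lemma:ntt} guarantees) into a true lower bound on trace length. I would also dispatch the degenerate cases explicitly: when $\ell = 0$ the claim is vacuous, and when $G_\tgreps$ contains a cycle reachable from $n$ the computation of $\toexpr(n, \tgreps)$ diverges, so its complexity is unbounded and the stated bound holds trivially (the finite simple path $p$ still gives a valid, if weak, lower bound). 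Together with \cref{lemma:toformula-ub}, this lemma shows that acyclicity of $G_\tgreps$ is both necessary and sufficient for the efficiency/termination of extraction.
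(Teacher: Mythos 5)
Your proof is correct and follows the same route as the paper's: both reduce the claim to \cref{lemma:ntt} by observing that every node on the path $p$ must appear in the execution trace of $\toexpr(n, \tgreps)$, so the trace has length at least $\mathit{length}(p)$. Your version is in fact more careful than the paper's one-line argument — you make the per-invocation cost model explicit, use distinctness of the path's vertices to ensure you are counting genuinely different invocations, and handle the cyclic/divergent case separately — but these are refinements of the same idea rather than a different approach.
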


\begin{proof}
  By \cref{lemma:ntt}, if there is a path from $n$ to
  some leaf $n_l$, $(n, \ldots, n_i,\ldots, n_l)$, all $n_i$ are on
  the execution trace of $\toexpr(n, \tgreps)$. \qed
\end{proof}
From \cref{lemma:toformula-lb}, it follows that if $G_\tgreps$ has a
cycle involving some node $n$, $\toexpr(n, \tgreps)$ does not
terminate, since there is no bound on the longest path, the
execution trace does not have a bound either.

\begin{lemma}\label{lma:toexpr}
  Given a formula $\varphi$ and $G = \egraph(\varphi)$ with an admissible representative function $\tgreps$. For every $n \in G$, $\varphi \models (\ntt(n) \eq \toexpr(n, \tgreps))$.
\end{lemma}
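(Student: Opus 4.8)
The plan is to prove the claim by induction following the recursive structure of $\toexpr(n,\tgreps)$. Because $\tgreps$ is admissible, the graph $G_\tgreps$ is acyclic (\cref{def:validrepr}(c)), so by \cref{lemma:ntt} and \cref{lemma:toformula-ub} the recursion terminates and the induction is well-founded: I can induct on the length of the longest path starting at $n$ in $G_\tgreps$ (equivalently, on the depth of the call tree of $\toexpr(n,\tgreps)$). The one nontrivial ingredient I will need is the \emph{soundness} of the egraph of $\varphi$, namely that nodes in the same class denote $\varphi$-equal terms: for all $m, m' \in \tgnodes$, if $\classof(m) = \classof(m')$ then $\varphi \models \ntt(m) \eq \ntt(m')$. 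I isolate this as an auxiliary claim and discuss its proof below.

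For the base case, suppose $\degree(n) = 0$. Then $\toexpr(n,\tgreps)$ returns $\tglabel(n)$, which is exactly $\ntt(n)$, so $\varphi \models \ntt(n) \eq \toexpr(n,\tgreps)$ holds by reflexivity. For the inductive step, let $\tglabel(n) = f$ and $\degree(n) = d > 0$, so that $\ntt(n) = f(\ntt(n[1]),\ldots,\ntt(n[d]))$ and $\toexpr(n,\tgreps) = f(\toexpr(\tgreps(n[1]),\tgreps),\ldots,\toexpr(\tgreps(n[d]),\tgreps))$. Fix a child $n[i]$. Each $\tgreps(n[i])$ is a one-step $G_\tgreps$-successor of $n$ and hence has a strictly shorter longest path, so the induction hypothesis gives $\varphi \models \ntt(\tgreps(n[i])) \eq \toexpr(\tgreps(n[i]),\tgreps)$. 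Since $\rho_\tgreps = \rho_\tgroot$ (\cref{def:validrepr}(b)), the nodes $n[i]$ and $\tgreps(n[i])$ lie in the same class, so the soundness claim yields $\varphi \models \ntt(n[i]) \eq \ntt(\tgreps(n[i]))$. Chaining the two by transitivity of $\eq$ gives $\varphi \models \ntt(n[i]) \eq \toexpr(\tgreps(n[i]),\tgreps)$ for every $i$. Applying the congruence axiom for $f$ then lets me replace each argument and conclude $\varphi \models \ntt(n) \eq \toexpr(n,\tgreps)$.

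I expect the main work to be the soundness claim, since the construction of $\egraph(\varphi)$ computes a congruence closure while \cref{def:egraph2}(c) only asserts that $\rho_\tgroot$ is \emph{some} congruence-closed equivalence relation rather than the least one. My plan is to exploit that $\egraph(\varphi)$ is built by merging exactly the literal pairs $(t_i,u_i)$ of $\varphi$ and then closing under congruence, so that $\rho_\tgroot$ is the \emph{least} congruence on the nodes of $G$ containing those pairs. I will then observe that the relation $R \eqdef \{(m,m') \mid \varphi \models \ntt(m) \eq \ntt(m')\}$ is itself an equivalence relation on $\tgnodes$, is closed under congruence (by the function-congruence axiom applied to equal arguments), and contains every literal pair $(t_i,u_i)$ (since $\varphi$ entails $t_i \eq u_i$). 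By minimality of $\rho_\tgroot$ it follows that $\rho_\tgroot \subseteq R$, which is exactly the soundness claim. The only subtlety to handle carefully is tying the informal ``compute the congruence closure'' description of $\egraph$ to the formal minimality statement; once that is pinned down, the rest is routine.
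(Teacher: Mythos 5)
Your proof is correct and follows essentially the same route as the paper's: induction on the recursive structure of $\toexpr$, with the base case by reflexivity and the inductive step chaining the induction hypothesis, the class-equality $\varphi \models \ntt(n[i]) \eq \ntt(\tgreps(n[i]))$, and congruence. Your added care in grounding the well-foundedness of the induction in the acyclicity of $G_\tgreps$ and in isolating the egraph soundness property (which the paper invokes only implicitly as a standard consequence of the congruence-closure construction) is a refinement, not a different argument.
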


\begin{proof}
By \cref{cor:ub-ntt}, for every node $n$, $\toexpr(n, \tgreps)$ is defined. 

We prove by induction. For readability, we omit the $\tgreps$ parameter of \toexpr.
For the base case, if $n$ has no children $\ntt(n)$ is exactly $\toexpr(n)$.
Therefore, by condition (b) of admissibility of $\tgreps$, $\varphi \models \ntt(n) \eq \ntt(\tgreps(n))$. 

For the inductive case, let $f = L(n)$. For every child $n[i]$: \\
By hypothesis, $\varphi \models (\ntt(\tgreps(n[i])) \eq \toexpr(\tgreps(n[i])))$;  \\
by transitivity, $\varphi \models \ntt(n[i]) \eq \toexpr(\tgreps(n[i]))$; \\
by congruence, $\varphi \models (f(\ntt(n[i])) \eq f(\toexpr(\tgreps(n[i]))))$; \\ 
by definition, $f(\ntt(n[i])) = \ntt(n)$ and $f(\toexpr(n[i])) = \toexpr(n)$. \\
Therefore $\varphi \models (\ntt(n) \eq \toexpr(n))$.
\qed
\end{proof}
We are finally ready to prove \cref{lemma:repr-termination}. Recall:
\repeattheorem{admissible}

\begin{proof}
Termination in both directions follows from
\cref{lemma:toformula-ub,lemma:toformula-lb}.

Given a formula $\varphi$, and its egraph $G = \egraph(\varphi)$, and admissible $\tgreps$, let $\psi = G.\toformula(\tgreps, \emptyset)$ we show that $\formula(G,\psi)$.
We prove that $\varphi \leftrightarrow \psi$ is a valid formula. We do so in two steps.  (1) Let $\psi$ be of the form $(\ell_1, \ldots , \ell_n)$ then
$\varphi \models \ell_i$, and (2) let $\varphi$ be of the form $(\ell_1, \ldots , \ell_n)$ then $\psi \models \ell_i$.

Step (1), let $\psi$ be of the form $(\ell_1, \ldots , \ell_n)$ then
$\varphi \models \ell_i$. Each $\ell_i$ is of the form $\toexpr(r) \eq \toexpr(n)$ where $r = \tgreps(n)$.
By construction and completeness of egraphs, and conditions (a) and (b) of \cref{def:validrepr}, $\varphi \models \ntt(r) \eq \ntt(n)$.
By \cref{lma:toexpr}, $\varphi \models\ntt(r) \eq \toexpr(r)$ and $\varphi \models \ntt(n) \eq \toexpr(n)$. 
Then, by transitivity, $\varphi \models (\toexpr(r) \eq \toexpr(n))$.

Step (2), let $\varphi$ be of the form $(\ell_1, \ldots, \ell_n)$ then $\psi \models \ell_i$.  Each $\ell_i$ is of the form $\ntt(n) \eq \ntt(m)$ where $n$ and $m$ are nodes in $G$.
$\psi$ contains literals $\toexpr(\tgreps(n)) \eq \toexpr(n)$ and $\toexpr(\tgreps(m)) \eq \toexpr(m)$, then by transitivity we have $\psi \models \toexpr(n) \eq \toexpr(m)$. Without loss of generality, assume that $n = \tgreps(m)$, then $\psi \models \toexpr(n) \eq \toexpr(m)$. We prove that $\psi \models \toexpr(n) \eq \ntt(n)$, and the previous equality will follow by transitivity.
By induction. For the base case, $n$ has no children $\toexpr(n)$ is exactly $\ntt(n)$, therefore,  $\psi \models \toexpr(n) \eq \ntt(n)$.
For the inductive case, assume that for all children $n[i]$, $\psi \models \toexpr(n[i]) \eq \ntt(n[i])$. $\psi \models \toexpr(\tgreps(n)) \eq \toexpr(n)$, therefore $\psi \models \toexpr(\tgreps(n[i])) \eq \toexpr(n[i])$. Then, by congruence and transitivity $\psi \models f(\toexpr(\tgreps(n[i]))) \eq f(\ntt(n[i]))$, and by definition $\psi \models \toexpr(n) \eq \ntt(n)$.
\qed
\end{proof}

\subsection{Properties of \qel}\label{app:find-repr-proof}
In this section, we prove the properties of \qel. 
We start by proving that $\finddefs$ (\cref{alg:ground-valid-repr}) computes a maximally ground, admissible representative function~(\cref{lemma:finddefs}).

We extend the range of representative functions with a new symbol \emph{undefined}~($\undefrepr$) to represent that a node does not have a representative. We call such functions, \emph{partial} representative functions. For a partial representative function, we
say that $\tgreps(n)$ is \emph{defined} if $\tgreps(n) \neq
\undefrepr$.
In the following, let $G = \langle \tgnodes, E, L, \tgroot \rangle$ be
an egraph and $\tgreps = G.\finddefs(\vars)$ (\cref{alg:ground-valid-repr}). We use $r$,
$r_1$, $r_2$, $r_i$ to denote partial representative functions, the intermediate results computed during the execution of $\finddefs$.

\cref{alg:ground-valid-repr} begins with a partial representative function that
maps all nodes to $\undefrepr$. The function $\processQ(r, \acq)$ assigns
representatives to classes of nodes. For a class, the choice of representative
depends on the order in which $\acq$ is processed. That is, if $n$ and $m$
belong to the same class and $n$ appears before $m$ in $\acq$, $n$ will be the
representative of their class. However, for now, we are only interested on
whether a class has a representative after executing \processQ, not which
representative is chosen. The following lemma states that once a node is in the
$\acq$ list, it will get assigned a representative.

\begin{lemma}\label{lm:reprchosen}
  For any $n\in \tgnodes$, let $r_2 = \processQ(\tgreps, \acq)$ be such that $n
  \in \acq$ during the execution of $\processQ$. Then $r_2(n)\neq \undefrepr$
\end{lemma}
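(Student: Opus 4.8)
The plan is to derive the statement from two elementary facts about $\processQ$ and then combine them by examining the iteration in which $n$ is popped.

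The first fact is a \emph{monotonicity} invariant: once $\tgreps(m)$ is defined for some node $m$, it stays defined for the remainder of the execution. This holds because the only write to $\tgreps$ inside $\processQ$ is the assignment $\tgreps(n') := n$ on line~\ref{ln:chooserepr}, whose right-hand side is always a genuine node, never $\undefrepr$; consequently no defined entry is ever reset. I would phrase this as an invariant maintained across iterations of the while loop, observing that the \textbf{continue} branch (guarded by line~\ref{ln:check-set-repr}) performs no writes at all and thus trivially preserves it.

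The second fact is that every node placed in $\acq$ at any point of the execution is eventually popped. Since we are given that $\processQ$ terminates and returns $r_2$, the while loop exits with $\acq = \emptyset$. The contents of $\acq$ change only through pushes and pops, and the loop removes a single element per iteration; hence any node present in $\acq$ at some intermediate point but (necessarily) absent at termination must have been popped during some iteration. Note that a node may be pushed more than once, for instance as a parent of several distinct children, but for the argument we only need a single pop of $n$.

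Combining the two facts yields the result. Consider the iteration in which $n$ is popped. If $\tgreps(n) \neq \undefrepr$ already holds, the guard on line~\ref{ln:check-set-repr} triggers and $\tgreps(n)$ is already defined. Otherwise $\tgreps(n) = \undefrepr$, and line~\ref{ln:chooserepr}, executed for each $n' \in \classof(n)$, sets $\tgreps(n') := n$; since $n \in \classof(n)$, this in particular sets $\tgreps(n) := n \neq \undefrepr$. In both cases $\tgreps(n)$ is defined immediately after $n$ is popped, and by the monotonicity invariant it remains defined through the \textbf{return}, so $r_2(n) \neq \undefrepr$. The only point requiring care is the bookkeeping around $\acq$, namely making the ``membership implies a later pop'' argument precise in the presence of repeated pushes; the monotonicity invariant neutralizes this, since any re-pop of an already-defined node simply falls into the \textbf{continue} branch and cannot undo definedness.
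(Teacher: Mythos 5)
Your proof is correct and follows essentially the same route as the paper's: a monotonicity invariant (once $\tgreps(m)$ is defined it is never unset) combined with the observation that every node placed in \acq is eventually popped, at which point it is either already defined or becomes the representative of its class via line~\ref{ln:chooserepr}. The only substantive difference is that you take termination of $\processQ$ as given, whereas the paper discharges it inside this very proof with a ranking argument (each iteration either decreases the number of classes without a representative or, keeping that number fixed, decreases the size of \acq); since your ``every queued node is eventually popped'' step depends on termination, you should either include that argument or point to where it is established.
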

\begin{proof}
  Since $\processQ$ never removes representatives, if $\tgreps(n)\neq
  \undefrepr$, $r_2(n)\neq \undefrepr$. If $n \in \acq$, it is eventually
  processed. At this point, either $\tgreps(n)$ is already defined, or $n$ is
  chosen as the representative for its class. $\processQ$ terminates because in
  each iteration, it either decreases the number of classes without
  representative or, if the number of classes without representatives remains
  the same, it decreases the size of $\acq$.\qed
  \end{proof}

To prove that $\processQ(r, \acq)$ assigns a representative to a node
$n\in\tgnodes$, we analyze the descendants of $n$. If all children of
$n$ are in $\acq$, $\processQ(r, \acq)$ assigns a representative to
$n$. Furthermore, if $c \in \children(n)$ is not in $\acq$ but all
children of $c$ are in $\acq$, then, $\processQ(r, \acq)$ assigns
representative to $c$, adds $c$ to $\acq$, and then assigns
representative to $n$.
Thus, even if a descendant of $n$ is not in
$\acq$, it is enough that all the children of the descendant are in
$\acq$. Such sets are called \emph{frontiers} of a node. The set of
children of $n$ is one frontier of $n$. Given a frontier, replacing
any non-leaf node with all of its children gives us another
frontier. Formally, we define the set of all frontiers of $n$ as:

\begin{definition}[Frontiers of a node]
Given a graph $G = \langle \tgnodes, E\rangle$, the \emph{frontiers}
  of a node $n \in \tgnodes$, denoted $\frontiers(n)$, is a set of
  sets of nodes defined as
  \begin{multline*}
  \frontiers(n) = \{ \{\children(n)\} \} \cup \hphantom{a} \\
   \{ (\frnt \cup \{\children(m)\}) \setminus \{ m \} \mid \frnt \in \frontiers(n) \land m
  \in \frnt \land \degree(m) > 0 \}
  \end{multline*}
\end{definition}

\begin{lemma}\label{lma:processqfrontier}
 Let $r_2 = \processQ(r_1, \acq)$. For any node $n \in \tgnodes$ s.t.
 $\exists \frnt \in\frontiers(n) \cdot \frnt \subseteq \acq$, $r_2(n)
 \neq \undefrepr$.
\end{lemma}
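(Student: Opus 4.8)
The plan is to reason in terms of which nodes acquire a defined representative by the end of the run, and to show that having an entire frontier enqueued forces a representative to propagate upward to $n$. I would say a node $m$ is \emph{satisfied} if $r_2(m) \neq \undefrepr$. By \cref{lm:reprchosen}, every node that is ever placed in \acq{} (in particular every node of the initial worklist) is satisfied, so it suffices to strengthen and recast the statement: for every $n$ and every $\frnt \in \frontiers(n)$, if \emph{all} nodes of $\frnt$ are satisfied then $n$ is satisfied. The original claim then follows because $\frnt \subseteq \acq$ makes every node of $\frnt$ satisfied via \cref{lm:reprchosen}.

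The key ingredient I would prove first is an \emph{upward-propagation} step: if $m$ is a non-leaf and every node in $\children(m)$ is satisfied, then $m$ is satisfied. I would argue this with a timing argument on the execution of \processQ. Among the children of $m$, let $c^\ast$ be the one whose class is assigned a representative last. At the moment that representative is chosen, \processQ iterates over the whole class of $c^\ast$ and over the parents of its members (lines~\ref{ln:begin-check-parents}--\ref{ln:end-check-parents}); since $m \in \mathtt{parents}(c^\ast)$ and, by the choice of $c^\ast$, all children of $m$ now have representatives, the guard on line~\ref{ln:parentsel} is satisfied and $m$ is pushed onto \acq. By \cref{lm:reprchosen}, $m$ is then satisfied (and if $m$ already had a representative earlier, we are done immediately).

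With this step in hand, I would prove the recast statement by induction on the number of expansion steps used to derive $\frnt$ from the base frontier $\children(n)$. In the base case $\frnt = \children(n)$, if all children of $n$ are satisfied, upward propagation yields that $n$ is satisfied. For the inductive step, $\frnt = (\frnt' \setminus \{m\}) \cup \children(m)$ for some non-leaf $m \in \frnt'$ and a strictly shorter derivation of $\frnt' \in \frontiers(n)$. Assuming all of $\frnt$ is satisfied, in particular all of $\children(m)$ is satisfied, so upward propagation makes $m$ satisfied; combined with the satisfied nodes of $\frnt' \setminus \{m\}$, this shows every node of $\frnt'$ is satisfied, and the induction hypothesis gives that $n$ is satisfied.

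The main obstacle is arranging the induction to go through: the naive hypothesis ``$\frnt \subseteq \acq$'' is \emph{not} preserved when a frontier is pushed one level deeper, since the expanded node $m$ need not belong to the initial \acq{} (it is only made satisfied through its children). Recasting everything in terms of \emph{satisfied} nodes before inducting is precisely what repairs this. The second delicate point is the timing argument in the upward-propagation step, where I must ensure the parent $m$ is examined exactly when its last outstanding child receives a representative. I would also flag the boundary behaviour of a leaf $n$, whose only frontier is $\emptyset$ and for which upward propagation does not apply: here the conclusion relies on the fact that in every use of this lemma the worklist is seeded with all leaves, so such an $n$ is enqueued and is satisfied directly by \cref{lm:reprchosen}.
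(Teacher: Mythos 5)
Your proof follows the same overall strategy as the paper's: induction on the derivation of $\frnt$ from the base frontier $\children(n)$, with the key mechanism being the parent check on line~\ref{ln:parentsel} combined with \cref{lm:reprchosen}. The genuine difference is that you strengthen the induction hypothesis from ``$\frnt \subseteq \acq$'' to ``every node of $\frnt$ is \emph{satisfied}'' (has a defined representative in $r_2$), and you isolate an explicit upward-propagation step. This is a real improvement: the paper's inductive step only establishes that the expanded node $m$ eventually receives a representative (concluding with $r_2(m)\neq\undefrepr$), but it never quite closes the loop back to $n$, since the induction hypothesis as literally stated requires $\frnt$ to be contained in the initial worklist and $m$ need not be (in the case where $m$ already has a representative, it is not even pushed). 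Your ``satisfied'' formulation is exactly the invariant that survives the frontier-expansion step, so your induction is airtight where the paper's is loose. One caveat applies to both arguments equally: the upward-propagation step needs at least one child of $m$ to have its class representative assigned \emph{during} the current call to \processQ --- if every child of $m$ were already defined in $r_1$, each popped child would be skipped at line~\ref{ln:check-set-repr}, the parents would never be examined, and there would be no ``moment the last representative is chosen.'' This cannot arise in either invocation inside \finddefs (the first call starts from the everywhere-undefined function and leaves no node undefined whose children are all defined), but neither your proof nor the paper's rules it out for an arbitrary $r_1$, so it deserves a sentence if the lemma is to stand on its own; your closing remark about leaves being seeded into the worklist addresses the analogous boundary issue for leaf nodes but not this one.
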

\begin{proof}
  We prove it by induction on the frontiers of $n$. For the base case, $\frnt =
  \{\children(n)\}$. \finddefs assigns a representative to each $c \in \frnt$ at
  line~\ref{ln:chooserepr}, if they did not have one. After assigning a representative for the last child,
  the condition on line~\ref{ln:parentsel} is true for $n$. Therefore, $n$ is
  added to the $\acq$ and a representative for $n$ is
  chosen~(\cref{lm:reprchosen}). For the inductive step, assume that the lemma
  holds for a set $\frnt \in \frontiers(n)$. That is, if $\frnt \subseteq \acq$,
  and $r_2 = \processQ(r_1, \acq)$, $r_2(n)$ defined. We prove that the lemma
  holds for $\frnt' = (\frnt \cup \{\children(m)\}) \setminus \{ m \}$, where
  $m$ is a non-leaf node in $\frnt$. Since all nodes in $\frnt'$ are in $\acq$,
  all these nodes will eventually have representatives~(\cref{lm:reprchosen}).
  When the representative of the last child is chosen, $m$ is a node such that
  either it has a representative or all its children have representatives. In
  the second case, $m$ is added to $\acq$~(line~\ref{ln:parentsel}) and by
  \cref{lm:reprchosen}, its representative is eventually picked. Hence,
  $r_2(m)\neq \undefrepr$.\qed
\end{proof}
\begin{lemma}\label{lma:total-repr}
   For all $n\in \tgnodes$, $\tgreps(n)$ is defined.
\end{lemma}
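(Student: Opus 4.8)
The plan is to show that by the time $\finddefs$ finishes its second call to $\processQ$ (line~\ref{ln:add-leaf}, where $\acq$ is initialized with every leaf of $G$), every node has been assigned a representative, and then to conclude using the fact that $\processQ$ never unsets a representative that is already defined. Concretely, I would fix an arbitrary node $n \in \tgnodes$ and exhibit, at the start of that second call, a set of already-queued nodes from which the bottom-up propagation in $\processQ$ is guaranteed to reach $n$. The two previously established lemmas provide exactly the needed machinery: \cref{lm:reprchosen} handles nodes placed in $\acq$ directly, and \cref{lma:processqfrontier} handles a node $n$ as soon as some frontier $\frnt \in \frontiers(n)$ is contained in $\acq$.

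The crux — and the step I expect to be the main obstacle — is a purely structural claim about frontiers: every node $n \in \tgnodes$ has a frontier $\frnt^\ast \in \frontiers(n)$ consisting entirely of leaves. Intuitively this is the ``maximal'' expansion of $\children(n)$ in which every non-leaf has been repeatedly replaced by its own children. To prove it I would argue that the expansion process defining $\frontiers$ terminates at an all-leaf frontier. Since $\langle \tgnodes, \tgedges\rangle$ is a finite DAG (\cref{def:egraph2}(a)), the longest-path height $h(m)$ is well defined for every node, with $h(m) = 0$ exactly for leaves. Starting from $\children(n)$, as long as the current frontier still contains a non-leaf I would always expand a member of maximal height $H > 0$, which is necessarily a non-leaf since $\degree(m) > 0$ iff $h(m) > 0$; replacing it by its children removes one node of height $H$ and introduces only nodes of height strictly below $H$. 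Ordering frontiers lexicographically by the pair (maximal height, number of members attaining it), each such expansion strictly decreases this measure, so the process terminates, and it can only terminate at a frontier all of whose members are leaves. Because $\frontiers(n)$ is closed under single-step expansion by its very definition, this terminal $\frnt^\ast$ lies in $\frontiers(n)$, and it is nonempty for non-leaf $n$ since every path out of $n$ in a finite DAG ends at a sink, i.e.\ a leaf.

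With this claim in hand the proof assembles quickly. Let $r_1$ be the partial representative function produced by the first call to $\processQ$, and recall that at line~\ref{ln:add-leaf} the set $\acq$ contains every leaf of $G$. If $n$ is itself a leaf, then $n \in \acq$ and \cref{lm:reprchosen} gives $\tgreps(n) \neq \undefrepr$. Otherwise $n$ is a non-leaf, and its all-leaf frontier $\frnt^\ast \in \frontiers(n)$ satisfies $\frnt^\ast \subseteq \acq$; applying \cref{lma:processqfrontier} with this $r_1$ and $\acq$ yields $\tgreps(n) \neq \undefrepr$. Since $\processQ$ only ever sets, and never clears, representatives, the value defined during the second call is exactly the value returned by $\finddefs$. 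As $n$ was arbitrary, $\tgreps(n)$ is defined for all $n \in \tgnodes$, as required.
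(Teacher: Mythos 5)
Your proof is correct and follows essentially the same route as the paper's: the paper likewise observes that every node has an all-leaf frontier contained in the second $\acq$ and concludes via \cref{lma:processqfrontier}. You simply make explicit two details the paper leaves implicit — the termination argument showing an all-leaf frontier exists (via longest-path height in the DAG) and the separate treatment of leaves through \cref{lm:reprchosen}.
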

\begin{proof}
  \cref{alg:ground-valid-repr} calls $\processQ$ with all leaf
  nodes. For any node $n$, there is a $\frnt \in \frontiers(n)$
  s.t. all nodes in $\frnt$ are leaf.  Therefore, by
  \cref{lma:processqfrontier} $\tgreps(n)$ is defined for all nodes.\qed
\end{proof}

\begin{definition}[Class frontiers of a node]
  \!\! Given an egraph $G\! =\! \langle \tgnodes,E,L,\tgroot\rangle$, the
  \emph{class frontiers} of a node $n \in \tgnodes$, denoted
  $\cfrontiers(n)$, is a set of sets of nodes defined as
  \begin{multline*}
  \cfrontiers(n) = \{ \{\children(n)\} \} \cup \hphantom{a}\\\{ (\frnt
  \cup \{c\}) \setminus \{ m \} \mid \frnt \in \cfrontiers(n) \land m
  \in \frnt \land c \in \classof(m) \} \cup \hphantom{a} \\ \{ (\frnt
  \cup \{\children(m)\}) \setminus \{ m \} \mid \frnt \in
  \cfrontiers(n) \land m \in \frnt \land \degree(m) > 0\}
  \end{multline*}
\end{definition}

\begin{lemma}\label{lma:cgrndfrontier}
    For any $n \in \tgnodes$ s.t. $\degree(n) > 0$, if $\cground(n)$, there
    exists a class frontier $\frnt \in \cfrontiers(n)$ s.t. each node in
    $\frnt$ is ground.
\end{lemma}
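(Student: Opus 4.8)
The plan is to reduce the statement to a per-class \emph{resolution} claim and then induct on the derivation depth of \cground-ness rather than on the DAG height of the egraph. Throughout I read ``a node $m$ is ground'' as ``$\ntt(m)$ is ground''; note that under the weaker reading ``$\classof(m)$ is ground'' the base class frontier $\children(n)$ already works in both clauses of the definition (in clause (b) every child class is ground by assumption, and in clause (a) each child inherits a ground term), so the term-level reading is the one that makes the lemma substantive. First I would split on why $n$ is \cground. If $\ntt(n)$ is ground (clause (a)), then since $\ntt(n) = f(\ntt(n[1]),\dots,\ntt(n[k]))$ every $\ntt(n[i])$ is ground, so the base class frontier $\children(n)\in\cfrontiers(n)$ already consists entirely of ground nodes and we are done. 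The interesting case is (b), where every child class $\classof(n[i])$ is ground but the children $n[i]$ themselves need not have ground terms.

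For case (b) I would isolate the following \emph{resolution} claim: for every node $m$ with $\classof(m)$ ground, the singleton $\{m\}$ can be rewritten, using only the class-replacement step ($m\mapsto c$ for $c\in\classof(m)$) and the child-expansion step ($m\mapsto\children(m)$ for non-leaf $m$) from the definition of $\cfrontiers$, into a set consisting solely of ground nodes. Granting this, the main lemma in case (b) follows: start from the base frontier $\children(n)$ and apply the resolution claim to each child $n[i]$ in turn. Since each class-frontier operation acts on a single element and touches only that element, its class, and its children, these rewrites are \emph{local}; the sequence obtained for $n[i]$ is also a legal sequence of $\cfrontiers$ operations inside the larger frontier, affecting only the part descended from $n[i]$. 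The union of the resulting ground sets is therefore a frontier in $\cfrontiers(n)$ all of whose nodes are ground.

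To prove the resolution claim I would induct on the \emph{derivation depth} of \cground-ness. Because \cground is an inductively defined (least) predicate, every \cground node $g$ carries a finite witnessing derivation: an (a)-leaf when $\ntt(g)$ is ground, and otherwise a (b)-node whose sub-derivations witness, for each child $g[j]$, that some node of $\classof(g[j])$ is \cground. Given $m$ with $\classof(m)$ ground, pick a \cground witness $g\in\classof(m)$ of minimal derivation depth and class-replace $m\mapsto g$. If its derivation is an (a)-leaf, $g$ is ground and we stop. Otherwise $g$ is non-leaf, so child-expand $g\mapsto\children(g)$; each $\classof(g[j])$ is ground and admits a witnessing derivation of strictly smaller depth, so the induction hypothesis resolves each $\{g[j]\}$ to a ground set, and we take the union.

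The main obstacle is making this induction well-founded. The natural measure, DAG height, does \emph{not} work: the class-replacement step may jump from $m$ to a witness $g$ of strictly larger height, so height need not decrease. The key insight is to measure progress by the depth of the \cground derivation, which strictly decreases at every child-expansion step and is unaffected by the sideways class-replacement step; this is precisely what rules out the apparent non-termination suggested by a circular chain of equalities that has no genuinely ground witness (and hence no ground class, so the claim never applies there). The remaining points are routine: verifying the base frontier in case (a), the locality argument that lets children be resolved independently and then unioned, and the degenerate sub-cases of the resolution claim (for instance $m$ itself already a ground leaf, handled by a trivial class-replacement).
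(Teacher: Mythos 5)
Your proof is correct and follows essentially the same route as the paper's: both unfold the inductive definition of \cground{}ness, starting from the base frontier $\children(n)$ and repeatedly applying class-replacement (to move to a \cground witness) and child-expansion (to descend to its children) until only ground nodes remain. Your version is in fact more explicit than the paper's, which leaves termination implicit in the phrase ``we repeat this process until we get a class frontier with only ground nodes, the base case of the definition''; your induction on the derivation depth of \cground{}ness supplies exactly the well-foundedness argument that the paper's informal iteration relies on.
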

\begin{proof}
  $n$ is \cground if either (a) $\ntt(n)$ is ground or (b) if
  $\degree(n) > 0$ and for each $c \in \children(n)$, there exists a
  \cground node in $\classof(c)$. To construct $\frnt$, we follow this
  definition. Initially, let $\frnt = \{\children(n)\}$. For each $c$
  that have ground nodes in its class, we replace $c$ with the ground
  node. For each non-ground, \cground node $c'$ in $\frnt$, we replace
  $c'$ with $\children(c')$. We repeat this process until we get a
  class frontier with only ground nodes, the base case of the
  definition.\qed
\end{proof}

\begin{lemma}\label{lma:processqcfrontier}
 Let $r_2 = \processQ(r_1, \acq)$. For any node $n \in \tgnodes$ s.t
 $\exists \frnt \in\cfrontiers(n) \cdot \frnt \subseteq \acq$, $r_2(n)
 \neq \undefrepr$.
\end{lemma}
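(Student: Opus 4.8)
The plan is to mirror the proof of \cref{lma:processqfrontier} and argue by induction on the recursive derivation of the class frontier $\frnt \in \cfrontiers(n)$ (every element of $\cfrontiers(n)$ has a finite derivation, since $\cfrontiers(n)$ is the least set containing $\{\children(n)\}$ and closed under the two generation rules). The invariant I would carry through the induction is exactly the statement: if $\frnt \subseteq \acq$ then $r_2(n) \neq \undefrepr$. The base case $\frnt = \{\children(n)\}$ is identical to the one in \cref{lma:processqfrontier}: by \cref{lm:reprchosen} every child of $n$ receives a representative, and when the class of the last child is assigned, the parent-scanning loop (lines~\ref{ln:begin-check-parents}--\ref{ln:end-check-parents}) finds that all children of $n$ are now assigned, pushes $n$ onto \acq (line~\ref{ln:parentsel}), so $r_2(n)$ is defined again by \cref{lm:reprchosen}.

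There are two inductive cases. The first, $\frnt' = (\frnt \cup \{\children(m)\}) \setminus \{m\}$ for a non-leaf $m \in \frnt$, is precisely the child-replacement step already handled in \cref{lma:processqfrontier}: from $\children(m) \subseteq \frnt' \subseteq \acq$ the node $m$ acquires a representative and is pushed onto \acq, which reduces the situation to the frontier $\frnt$ and lets me invoke the induction hypothesis. The new case is the class-replacement step $\frnt' = (\frnt \cup \{c\}) \setminus \{m\}$ with $c \in \classof(m)$. Here I would use that $c \in \frnt' \subseteq \acq$, so by \cref{lm:reprchosen} the class $\classof(c) = \classof(m)$ acquires a representative; the key point is that assigning a class representative (line~\ref{ln:chooserepr}) is immediately followed by the parent loop over \emph{all} nodes of that class (lines~\ref{ln:begin-check-parents}--\ref{ln:end-check-parents}), in particular over the parents of $m$. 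Thus the upward propagation triggered by $m$ is exactly the same as if $m$ itself had been popped from \acq, and I can again fall back to the smaller frontier $\frnt$ (whose remaining nodes $\frnt \setminus \{m\} = \frnt' \setminus \{c\}$ lie in \acq) and apply the induction hypothesis.

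The main obstacle is making precise the phrase ``$m$ behaves as if it were in \acq.'' The induction hypothesis is stated for the \emph{initial} queue, whereas in both inductive steps $m$ only acquires its representative during the run. I expect to discharge this with a short monotonicity observation about \processQ: the parent-scanning loop fires once per class, exactly when that class first receives a representative (regardless of which class member caused it), and \processQ never retracts a representative. Consequently, inserting a node into the initial queue and having that node's class acquire a representative mid-run have identical effects on the set of nodes that are eventually pushed and assigned. With this observation both reductions above are justified, and the induction yields $r_2(n) \neq \undefrepr$ for every $\frnt \in \cfrontiers(n)$ with $\frnt \subseteq \acq$.
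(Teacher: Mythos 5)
Your proposal is correct and follows essentially the same route as the paper's proof: induction on the derivation of the class frontier, with the base case delegated to \cref{lma:processqfrontier}, the child-replacement case handled as in that lemma, and the class-replacement case resolved by the fact that all members of a class receive the same representative and trigger the same parent scan. Your treatment is in fact somewhat more careful than the paper's terse handling of case (b) and of the ``mid-run vs.\ initial queue'' subtlety, but the underlying argument is identical.
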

\begin{proof}
  We prove it by induction on the class frontiers of $n$. The base
  case, when $\frnt = \{\children(n)\}$, follows from
  \cref{lma:processqfrontier}. For the inductive step, assume that the
  lemma holds for a set $\frnt \in \cfrontiers(n)$. That is, if $\frnt
  \subseteq \acq$ and $r_2 = \processQ(r_1, \acq)$, $r_2(n)$ is
  defined. We prove that the lemma holds for both cases: (a) $\frnt' =
  (\frnt \cup \{\children(m)\}) \setminus \{ m \}$, where $m$ is a
  non-leaf node in $\frnt$, and (b) $\frnt'' = (\frnt \cup \{m'\})
  \setminus \{ m \}$, where $m\in\frnt$ and $m'\in\classof(m)$. Case
  (a) follows from the same arguments used in the proof of
  \cref{lma:processqfrontier}. Case (b) holds because both $m'$ and
  $m$ have the same representative, as they are in the same class.\qed
\end{proof}

\begin{lemma}\label{lemma:cground-repr}
 $\forall n\in \tgnodes \cdot \cground(n) \limp \cground(\tgreps(n))$
\end{lemma}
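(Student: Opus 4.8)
The plan is to show that $\finddefs$ (\cref{alg:ground-valid-repr}) produces a \emph{maximally ground} representative function, which is exactly what this lemma asserts. Since all nodes of a class share one representative and $\tgreps(n)=\tgreps(m)$ for every $m\in\classof(n)$, it suffices to establish two facts about the two-phase structure of $\finddefs$: (i) \textbf{every representative chosen during the first phase} (the call to $\processQ$ over the ground leaves, line~\ref{ln:grndprocessq}) \textbf{is c-ground}; and (ii) \textbf{every c-ground node receives a representative already in the first phase}. Combining them, a c-ground node's representative is one chosen in the first phase and hence c-ground, giving $\cground(n)\limp\cground(\tgreps(n))$. Throughout I will use that $\processQ$ never overwrites a representative once set: line~\ref{ln:check-set-repr} skips any popped node whose class is already assigned, so first-phase assignments persist into the second phase.

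For fact (i) I would induct on the order in which nodes are selected as representatives during the first $\processQ$ call. In the base case the node is one of the initial ground leaves, which is c-ground by clause~(a) of the definition (its term is ground). For the inductive step, a selected node $p$ that is not an initial ground leaf must have entered $\acq$ as a parent on line~\ref{ln:parentsel}, which fires only when \emph{all} children of $p$ already have defined representatives; those representatives were chosen strictly earlier, so by the induction hypothesis each $\tgreps(c)$ is c-ground. Then each child class $\classof(c)=\classof(\tgreps(c))$ contains a c-ground node and is therefore ground, and since $p$ is a parent we have $\degree(p)>0$; clause~(b) makes $p$ c-ground. (Representatives never change, so the children's representatives are still the same c-ground nodes when $p$ is finally popped.)

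For fact (ii) I would split on whether the c-ground node $n$ is a leaf. If $n$ is a leaf, clause~(b) is inapplicable, so $n$ is c-ground via clause~(a), i.e.\ $\ntt(n)$ is ground; hence $n$ is a ground leaf, lies in $\acq$ in the first phase, and \cref{lm:reprchosen} gives $\tgreps(n)\neq\undefrepr$ after it. If $\degree(n)>0$, \cref{lma:cgrndfrontier} yields a class frontier $\frnt\in\cfrontiers(n)$ all of whose nodes are ground. I would then refine $\frnt$ into a frontier of ground \emph{leaves}: whenever a non-leaf ground node $m$ occurs, replace it by $\children(m)$ (the third clause of $\cfrontiers$), noting that groundness of $\ntt(m)$ forces all children to be ground as well. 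This terminates because $G$ is a finite DAG, producing $\frnt'\in\cfrontiers(n)$ consisting solely of ground leaves. Since the first-phase $\acq$ is exactly the set of ground leaves, $\frnt'\subseteq\acq$, and \cref{lma:processqcfrontier} guarantees $\tgreps(n)$ is defined after the first phase.

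Finally I would assemble the two facts: for any c-ground $n$, fact~(ii) shows $\tgreps(n)$ is set during the first phase (and, by the non-overwriting property, never revised), so the popped node that became $\tgreps(n)$ was chosen in the first phase; fact~(i) then gives $\cground(\tgreps(n))$. I expect the delicate point to be fact~(i): one must argue that the parent-pushing discipline on lines~\ref{ln:parentsel}--\ref{ln:chooserepr} only ever promotes c-ground nodes, which hinges on representatives being immutable once assigned. A secondary subtlety is the frontier refinement in fact~(ii), where the witness from \cref{lma:cgrndfrontier} is a frontier of ground \emph{nodes} whereas $\acq$ is seeded with ground \emph{leaves}; bridging this gap via the $\cfrontiers$ expansion rule is the key move that lets \cref{lma:processqcfrontier} apply.
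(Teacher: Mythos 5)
Your proof is correct and follows the same frontier-based analysis of the two-phase $\processQ$ that the paper uses, but it is in fact \emph{more} complete than the paper's own argument. The paper's proof establishes only your fact~(ii): ground nodes have frontiers of ground leaves, so \cref{lma:processqfrontier} gives them representatives in the first phase; then \cref{lma:cgrndfrontier} and \cref{lma:processqcfrontier} extend this to all \cground nodes. It stops there, concluding that the first call to $\processQ$ ``assigns representatives for all \cground nodes'' --- which is not literally the statement $\cground(n)\limp\cground(\tgreps(n))$. The missing link is exactly your fact~(i): that every node ever pushed to \acq during the first phase (and hence every representative chosen there) is itself \cground, which you establish by induction on selection order using the guard on line~\ref{ln:parentsel} and the immutability of assignments enforced by line~\ref{ln:check-set-repr}. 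Your explicit refinement of the ground-node class frontier from \cref{lma:cgrndfrontier} into a frontier of ground \emph{leaves} also tightens a step the paper glosses over (it invokes \cref{lma:processqcfrontier} with a frontier ``in \acq'' even though that lemma is phrased in terms of the initial worklist). In short: same route, but you have filled the two places where the paper's proof is implicit.
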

\begin{proof}
  For all ground nodes, all frontiers consists of only ground nodes. In
  particular, they have a frontier where all nodes are both ground and leaf.
  \cref{alg:ground-valid-repr} calls $\processQ$ with all ground, leaf nodes in
  $\acq$~(line~\ref{ln:grndprocessq}). Therefore, by
  \cref{lma:processqfrontier}, this call to $\processQ$ assigns representatives
  to all ground nodes in the egraph. At this point, all ground nodes either
  have representatives or are in $\acq$. Therefore, by \cref{lma:cgrndfrontier},
  each \cground node in the egraph has a frontier in $\acq$. Hence, by
  \cref{lma:processqcfrontier}, $\processQ$ assigns representatives for all
  \cground nodes.\qed
\end{proof}

We extend the definition of admissibility of representative functions to
partial representative functions as follows.
\begin{definition}[Admissible partial representative functions]\label{def:validpartialrepr}
Given an egraph $G = \langle \tgnodes, E, L, \tgroot\rangle$, we say
that a function $\tgreps : \tgnodes \rightarrow \tgnodes \cup
\{\undefrepr\}$, is an \emph{admissible partial representative function} for $G$ if:
\begin{itemize}\itemsep=0pt
\item[(a)]
$\forall n \in \tgnodes\cdot (\tgreps(n) \neq \undefrepr) \limp (\forall n' \in \classof(n) \iff
\tgreps(n) = \tgreps(n'))$, 
\item[(b)] the graph $\langle \tgnodes,
E_{\tgreps}\rangle$ is acyclic where $E_{\tgreps} = \{ (n, \tgreps(c))
\mid n \in \tgnodes, c \in \children(n), \tgreps(c)\neq \undefrepr \}$, and
\item[(c)] $\forall n, n'\in \tgnodes\cdot n = \tgreps(n') \limp
\forall c \in \children(n)\cdot \tgreps(c) \neq \undefrepr$.
\end{itemize}
\end{definition}
The first two conditions state that $\tgreps$ is admissible for all nodes
that it defines. The third condition states that, for all nodes in the
range of $\tgreps$, the representatives for their children are also
defined. This allows building admissible partial representative functions based on already admissible partial representative functions.

\begin{lemma}\label{lma:validpartreprinv}
  Let $r$ be an admissible partial representative function. Let $n$ be a
  node s.t. $r(n) = \undefrepr$ and $\forall c \in \children(n)\cdot r(c)
  \neq \undefrepr$. Let $r_1$ be a partial representative function defined as
  $r_1(n') = ite(n' \in \classof(n), n, r(n'))$. $r_1$ is an \emph{admissible}
  partial representative function.
\end{lemma}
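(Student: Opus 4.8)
The plan is to verify the three conditions of \cref{def:validpartialrepr} for $r_1$ directly, after first recording two structural facts that pin down exactly how $r_1$ differs from $r$. The first fact (Obs.~1) is that the whole class $\classof(n)$ is undefined under $r$: if some $m \in \classof(n)$ had $r(m) \neq \undefrepr$, then condition (a) of $r$ applied to $m$ with the witness $n \in \classof(m)$ would force $r(m) = r(n) = \undefrepr$, a contradiction. The second fact (Obs.~2) is that $n$ is not the representative of any node, i.e. $n \notin \mathrm{range}(r)$: since the representative of a class is a member of that class, $r$ is idempotent on its range, so $n = r(m)$ would give $r(n) = r(m) = n \neq \undefrepr$, contradicting $r(n) = \undefrepr$. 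Together these give $\mathrm{range}(r_1) = \mathrm{range}(r) \cup \{n\}$, and $r_1(c) \neq \undefrepr$ whenever $r(c) \neq \undefrepr$. Moreover, since every child of $n$ is defined under $r$ (hypothesis) while all of $\classof(n)$ is undefined (Obs.~1), no child of $n$ lies in $\classof(n)$; hence the outgoing edges of $n$ are unchanged from $E_r$ to $E_{r_1}$, and the only edges of $E_{r_1}$ not already in $E_r$ are edges $(m,n)$ where $m$ has a child in $\classof(n)$.

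With these in hand, conditions (a) and (c) are routine. For (a), if $m \in \classof(n)$ then $r_1(m) = n$ and the $r_1$-preimage of $n$ is exactly $\classof(n)$ (Obs.~2 rules out outside nodes mapping to $n$); if $m \notin \classof(n)$ then $\classof(m)$ is disjoint from $\classof(n)$, so $r_1$ coincides with $r$ on all of $\classof(m)$ and the claim transfers from condition (a) of $r$, with Obs.~2 again excluding the value $n$. For (c), using $\mathrm{range}(r_1) = \mathrm{range}(r) \cup \{n\}$: nodes in $\mathrm{range}(r)$ inherit definedness of their children from condition (c) of $r$ (promoted to $r_1$), and for $n$ itself the hypothesis directly gives that every child of $n$ is defined under $r_1$.

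The main obstacle is condition (b), acyclicity of $E_{r_1}$. Since $E_r$ is acyclic and every new edge points into $n$, any cycle must pass through $n$, entering by a new edge $(u,n)$ (there are no $E_r$-edges into $n$, by Obs.~2) and leaving by an unchanged edge $(n, r(c'))$ with $c' \in \children(n)$. The sub-path from $r(c')$ back to $u$ cannot revisit $n$, so it lies entirely in $E_r$. The crux is a range-membership dichotomy that rules this out: the tail $u$ of a new edge has a child $c'' \in \classof(n)$ with $r(c'') = \undefrepr$, so by condition (c) of $r$ we have $u \notin \mathrm{range}(r)$; on the other hand, by the definition of $E_{\tgreps}$, every node reached after at least one $E_r$-step is the $r$-representative of some child and hence lies in $\mathrm{range}(r)$, while the length-zero case gives $u = r(c') \in \mathrm{range}(r)$ directly. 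Both cases contradict $u \notin \mathrm{range}(r)$, so no cycle exists.

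Thus all three conditions hold and $r_1$ is an admissible partial representative function. I expect the bookkeeping in (a) and (c) to be short, and the single genuinely delicate point to be the acyclicity argument, where the leverage comes from condition (c) of the hypothesis forcing $u \notin \mathrm{range}(r)$ and clashing with the fact that interior nodes of an $E_r$-path are always representatives.
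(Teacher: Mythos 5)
Your proof is correct and its crux coincides with the paper's own argument: the paper likewise uses condition (c) of the definition of admissible partial representative functions to show that every node reachable from $n$'s children under $E_r$ lies in the range of $r$ and hence cannot supply the undefined child in $\classof(n)$ needed for an edge back into $n$. Your write-up is more explicit than the paper's (which leaves conditions (a) and (c) for $r_1$ implicit and states the acyclicity step very tersely), but it is the same approach.
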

\begin{proof}
 By condition (c) of \cref{def:validpartialrepr}, for any child $c$ of
 $n$ and for any node $m$ reachable from $c$ in the graph $\langle N,
 E_{r}\rangle$, $r(m)$ is defined. Therefore, $r(m)\not \in
 \classof(n)$ as $r(n)$ is undefined. Therefore, choosing $n$ as
 representative of its equivalence class does not introduce any
 cycles.\qed
\end{proof}
\begin{lemma}\label{lemma:finddefs}
  Representative functions $\tgreps$ computed by \finddefs are maximally ground and satisfy \cref{def:validrepr}.
\end{lemma}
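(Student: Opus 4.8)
The plan is to establish the two claims separately, reusing the auxiliary lemmas already proved for $\finddefs$. For maximal groundness, recall that a class is ground exactly when it contains a \cground node. So suppose $\classof(n)$ is ground and pick a \cground node $m \in \classof(n)$. By \cref{lemma:cground-repr}, $\tgreps(m)$ is \cground, and since $m$ and $n$ lie in the same class the algorithm assigns them the same representative: line~\ref{ln:chooserepr} sets $\tgreps$ uniformly over the whole class, and the guard on line~\ref{ln:check-set-repr} prevents any later reassignment. Hence $\tgreps(n) = \tgreps(m)$ is \cground, which is exactly maximal groundness. Totality of $\tgreps$, needed to know that $\tgreps(m)$ is defined in the first place, is \cref{lma:total-repr}.

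For admissibility I would maintain partial admissibility (\cref{def:validpartialrepr}) as an invariant preserved by every representative choice made across both invocations of $\processQ$ inside $\finddefs$, and then specialize it to the total function produced at the end. The base case is the initial all-$\undefrepr$ function, for which the edge set $E_{\tgreps}$ is empty and all three conditions hold vacuously. For the inductive step, the only event that changes $\tgreps$ is the choice of a popped node $n$ as the representative of its class (lines~\ref{ln:check-set-repr}--\ref{ln:chooserepr}); if that class already had a representative, $\tgreps$ is unchanged and the invariant is preserved trivially. When $n$ is genuinely chosen, I claim every child of $n$ already has a representative: a node enters $\acq$ either as a leaf (no children, so the claim is vacuous) or through line~\ref{ln:parentsel}, which fires only once all of its children have been assigned representatives, and since $\processQ$ never un-assigns a representative, these remain defined when $n$ is eventually popped. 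This is precisely the precondition of \cref{lma:validpartreprinv}, so the updated partial function is again admissible.

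It then remains to observe that a \emph{total} admissible partial representative function is admissible in the sense of \cref{def:validrepr}. Once every node is defined, condition (a) of \cref{def:validpartialrepr} states that two nodes share a representative iff they are in the same class, which yields both uniqueness of representatives per class (\cref{def:validrepr}(a)) and $\rho_{\tgroot} = \rho_{\tgreps}$ (\cref{def:validrepr}(b)). Condition (b) of \cref{def:validpartialrepr} becomes acyclicity of $G_{\tgreps}$ (\cref{def:validrepr}(c)), since the $\tgreps(c) \neq \undefrepr$ side condition on the edge set is now always satisfied and the two edge sets coincide; and condition (c) is discharged automatically by totality. Combining this with the maximal-groundness argument completes the proof.

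The hard part will be the inductive step for acyclicity: one must argue rigorously that the workset discipline of $\finddefs$ guarantees the ``all children already assigned'' precondition whenever a node is actually chosen as a representative. This is exactly what aligns the bottom-up selection order with \cref{lma:validpartreprinv} and rules out cycles in $G_{\tgreps}$; once that is in place, the remainder is bookkeeping that translates partial admissibility of the final total function into the three conditions of \cref{def:validrepr}.
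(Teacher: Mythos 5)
Your proof is correct and follows essentially the same route as the paper's: preservation of partial admissibility via \cref{lma:validpartreprinv} for each representative choice, totality from \cref{lma:total-repr}, and ground maximality from \cref{lemma:cground-repr}. You are somewhat more careful than the paper in two spots it leaves implicit --- lifting \cref{lemma:cground-repr} from c-ground nodes to arbitrary nodes of ground classes via uniform assignment over the class, and translating total partial admissibility into the three conditions of \cref{def:validrepr} --- but these are elaborations of the same argument, not a different approach.
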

\begin{proof}  
  \cref{alg:ground-valid-repr} chooses a node $n$ as class representative
  either if $n$ is a leaf or if the representatives of
  all children of $n$ have been chosen. Since choosing leaves as
  representatives does not introduce cycles, the first case preserves
  admissibility of partial representative
  functions. \cref{lma:validpartreprinv} shows that the second case
  also preserves admissibility of partial representative functions.
  By \cref{lma:total-repr}, $\tgreps$ is a total function.
  Ground maximality is stated in~\cref{lemma:cground-repr}.
  \qed
\end{proof}

Finally, we prove that finding a maximally ground, admissible
representative function is sufficient to find ground definitions for any variables that have it. Recall \cref{thm:gr-defs-new}:
\repeattheorem{thm:gr-defs-new}
\begin{proof}
  From completeness of egraphs, it follows that if $\varphi \models
  \ntt(n)\eq t$ for some ground term $t$, $\classof(n)$ is ground. By definition of maximally ground, for all such nodes,
  $\tgreps(n)$ is \cground. Next, we prove that for all representative
  nodes $r\in \mathit{range}(\tgreps)$ that are c-ground, $\toexpr(r)$ is
  ground.
  
  We prove this by induction on $G_\tgreps$. For the base case, all leaves $l \in \mathit{range}(\tgreps)$ s.t. $\cground(l)$,
  $\toexpr(l)$ is ground. By definition of \cground, leaves are \cground if they are ground. Therefore, $\toexpr(l) =
  \ntt(l)$ is ground. For the inductive case, we assume that, for a
  c-ground representative node $n$, all its children $n[i]$ have the property that $\toexpr(\tgreps(n[i]))$ is
  ground. Therefore, for $f = L(n)$, and $\mathit{args}$ s.t. $\mathit{args}[i] = \toexpr(\tgreps(n[i]))$, $\toexpr(n) = f(\mathit{args})$ is also ground.
  \qed
\end{proof}

\noindent
We are now ready to prove relative completeness of \qel. Recall:

\repeattheorem{qelsummary}

\begin{proof}
For condition (1), variables that have a ground representative are excluded from \core, therefore, they are excluded in \toformula, and are successfully eliminated.

For condition (2), \toformula calls \toexpr only on nodes in $NS$, this triggers a call to \toexpr only for all representatives of the children, i.e., only the ones reachable in $G_\tgreps$. Therefore, if the node is not reachable in $G_\tgreps$, it does not appear in the output formula. \qed
\end{proof}
 
\section{Implementation of MBP rules within egraphs}\label{sec:mbprules}
\newcommand{\hasvars}{\mathit{has\_vars}}
\newcommand{\hasvar}{\mathit{has\_var}}
\newcommand{\isvar}{\mathit{is\_var}}
\newcommand{\peq}{=^p}
In this section, we give egraph implementations of most of the rules for Array
MBP~\cite{DBLP:conf/fmcad/KomuravelliBGM15} and ADT
MBP~\cite{DBLP:conf/lpar/BjornerJ15}. $\peq$ is the partial equality predicate defined in \cite{DBLP:conf/fmcad/KomuravelliBGM15}.

\begin{figure}
  \centering

  \scalebox{0.9}{
\begin{minipage}[t]{0.45\textwidth}
\begin{tabular}{l}
$\inferrule*[lab=\textit{PartialEq}]{\varphi[e_1\eq e_2]}{\varphi[e_1\peq_{\emptyset} e_2]}$\\
\end{tabular}
\end{minipage}\begin{minipage}{0.54\textwidth}
  $\textit{PartialEq}$
  \begin{algorithmic}[1]
    \Function{\checkr}{$t$}
    \State \Return $t = e_1 \eq e_2 \land \hasvars(t)$
    \EndFunction
  \Function{\applyr}{$t, M, G$}
  \State $G.\textit{assert}(e_1 \peq_{\emptyset} e_2)$
  \EndFunction
  \end{algorithmic}
\end{minipage}
  }

  \scalebox{0.9}{
\begin{minipage}[t]{0.45\textwidth}
  \begin{tabular}{l}
 $\inferrule*[lab=\textit{ElimEq}, right=$\neg \hasvar (e\, v)$]{\exists v\cdot v \peq_{\Vec{i}} e\land \varphi}{\varphi [a \rightarrow \arwrite(e, \Vec{i}, \Vec{d})]}$
\end{tabular}
\end{minipage}
\begin{minipage}
{0.54\textwidth}
  $\textit{ElimEq}$
  \begin{algorithmic}[1]
    \Function{\checkr}{$t$}
    \State \Return $t =  v \peq_{\Vec{i}} e \land{}$
    \Statex \hfill $\isvar(v) \land \neg\hasvar(e, v)$
    \EndFunction
    \Function{\applyr}{$t, M, G$}
    \State $\Vec{d} := \textit{newVars}()$
    \State $G.\textit{assert}(v \eq \arwrite(e, \Vec{i}, \Vec{d}))$
    \State $G.\textit{assert}(t \eq \top)$
    \State $\forall 0 \leq j < len(\Vec{i}) \cdot M[\Vec{d}_j] := M[\textit{select}(v, \Vec{i}_j)]$
  \EndFunction
  \end{algorithmic}
\end{minipage}
  }

   \scalebox{0.9}{
  \begin{minipage}{0.45\textwidth}
    \centering
\begin{tabular}{l}
  $\inferrule*[lab=\textsc{Ackermann1}, right=$M \models t_1 \eq t_2$]{\varphi\land \arread(v, t_1) \eq s_1\land \hphantom{a}\\read(v, t_2) \eq s_2}{\varphi \land s_1 \eq s_2 \land t_1 \eq t_2}$\\ \\
$\inferrule*[lab=\textsc{Ackermann2}, right=$M\models i \deq j$]{\varphi \land \arread(v, t_1) \eq s_1\land\hphantom{a}\\ read(v, t_2) \eq s_2}{\varphi \land t_1 \deq t_2}$
\end{tabular}
\end{minipage}
\begin{minipage}{0.54\textwidth}
  \small
  $\textit{Ackermann}$
  \begin{algorithmic}[1]
    \Function{\checkr}{$\langle t_1, t_2\rangle$}
    \State \Return $t_1 = \arread(v, e_1) \land{}$\\
    \hfill $t_2 = \arread(v, e_2) \land e_1 \neq e_2$
    \EndFunction
  \Function{\applyr}{$t_1, t_2, M, G$}
  \If {$M \models e_1 \eq e_2$}
  \State $G.\textit{assert}(e_2 \eq e_2)$
  \Else
  \State $G.\textit{assert}(e_1 \deq e_2)$
  \EndIf
  \EndFunction
  \end{algorithmic}
\end{minipage}
}
\caption{Implementation of additional Array rules from~\cite{DBLP:conf/fmcad/KomuravelliBGM15}. The \textsc{Ackermann} rule to eliminate $\arread$
  terms and its implementation in egraphs. This rule is applied
  to each pair of $\arread(v, t)$ terms.\label{fig:ackerman} \label{fig:array-extra}}
\end{figure}

 }{}
\end{document}